\newcommand{\KL}{\mathbf{KL}}
\newcommand{\diag}{\textnormal{diag}}
\newtheorem{assumption}{Assumption}
\newtheorem{remark}{Remark}
\long\def\comment#1{}
\newtheorem{lemma}{Lemma}
\newtheorem{theorem}{Theorem}
\newtheorem{definition}{Definition}
\newtheorem{corollary}{Corollary}
\newcommand{\Br}{\mathbb{R}}
\newcommand{\one}{\textbf{1}}
\def\aA{\mathbf{a}}
\def\bB{\mathbf{b}}
\def\xX{\mathbf{x}}
\def\yY{\mathbf{y}}
\newcommand{\ba}{\begin{array}}
\newcommand{\ea}{\end{array}}
\newcommand{\Rspace}{\mathbb{R}}
\newcommand{\bigO}{\mathcal{O}}
\newcommand{\bigOtil}{\widetilde{\mathcal{O}}}
\newcommand{\mnorm}[1]{\|{#1}\|_{\scriptscriptstyle \infty}}
\icmltitlerunning{On Unbalanced Optimal Transport: An Analysis of Sinkhorn Algorithm}
\begin{document}

\twocolumn[
\icmltitle{On Unbalanced Optimal Transport: An Analysis of Sinkhorn Algorithm}

\icmlsetsymbol{equal}{*}

\begin{icmlauthorlist}
\icmlauthor{Khiem Pham}{equal,1}
\icmlauthor{Khang Le}{equal,1}
\icmlauthor{Nhat Ho}{2}
\icmlauthor{Tung Pham}{1,3}
\icmlauthor{Hung Bui}{1}
\end{icmlauthorlist}

\icmlaffiliation{1}{VinAI Research}
\icmlaffiliation{2}{Department of EECS, University of California, Berkeley}
\icmlaffiliation{3}{Faculty of Mathematics, Mechanics and Informatics, Hanoi University of Science, Vietnam National University}

\icmlcorrespondingauthor{Nhat Ho}{minhnhat@berkeley.edu}

\icmlkeywords{Unbalanced Optimal Transport, Sinkhorn algorithm, Complexity analysis}

\vskip 0.3in
]
\printAffiliationsAndNotice{\icmlEqualContribution}

\begin{abstract}
We provide a computational complexity analysis for the Sinkhorn algorithm that solves the entropic regularized Unbalanced Optimal Transport (UOT) problem between two measures of possibly different masses with at most $n$ components.  We show that the complexity of the Sinkhorn algorithm for finding an $\varepsilon$-approximate solution to the UOT problem is of order  $\bigOtil(n^2/ \varepsilon)$. To the best of our knowledge, this complexity is better than the best known complexity upper bound of the Sinkhorn algorithm for solving the Optimal Transport (OT) problem, which is of order $\bigOtil(n^2/\varepsilon^2)$. Our proof technique is based on the geometric convergence rate of the Sinkhorn updates to the optimal dual solution of the entropic regularized UOT problem and scaling properties of the primal solution. It is also different from the proof technique used to establish the complexity of the Sinkhorn algorithm for approximating the OT problem since the UOT solution does not need to meet the marginal constraints of the measures.
\end{abstract}

\section{Introduction} \label{sec:Introduction}
The Optimal Transport (OT) problem has a long history in mathematics and operation research, originally used to find the optimal cost to transport masses from one distribution to the other distribution~\cite{Villani-03}. Over the last decade, OT has emerged as one of the most important tools to solve interesting practical problems in statistics and machine learning~\cite{Ho-ICML-2017, Arjovsky-2017-Wasserstein, Courty-2017-Optimal, Srivastava-2018-Scalable, peyre2019computational}. Recently, the Unbalanced Optimal Transport (UOT) problem between two measures of possibly different masses has been used in several applications in computational biology~\cite{Schiebinger_Optimal_2019}, computational imaging~\cite{lee2019parallel}, deep learning~\cite{Yang_Scalable_2019} and machine learning and statistics~\cite{frogner2015learning, Janati_Wasserstein_2019}. 

The UOT problem is a regularized version of Kantorovich formulation which places penalty functions on the marginal distributions based on some given divergences~\cite{Liero_Optimal_2018}. When the two measures are the probability distributions, the standard OT is a limiting case of the UOT. Under the discrete setting of the OT problem where each probability distribution has at most $n$ components, the OT problem can be recast as a linear programming problem. The benchmark methods for solving the OT problem are interior-point methods of which the most practical complexity is $\bigOtil(n^3)$ developed by~\cite{Pele-2009-Fast}. Recently,~\citep{Lee-2014-Path} used Laplacian linear system algorithms to improve the complexity of interior-point methods to $\bigOtil(n^{5/2})$. However, the interior-point methods are not scalable when $n$ is large. 

In order to deal with the scalability of computing the OT,~\cite{Cuturi-2013-Sinkhorn} proposed to regularize its objective function by the entropy of the transportation plan, which results in the entropic regularized OT. One of the most popular algorithms for solving the entropic regularized OT is the Sinkhorn algorithm~\cite{Sinkhorn-1974-Diagonal}, which was shown by ~\cite{altschuler2017near} to have a complexity of $\bigOtil(n^2/ \varepsilon^3)$ when used to approximate the OT within an $\varepsilon$-accuracy. In the same article,~\cite{altschuler2017near} developed a greedy version of the Sinkhorn algorithm, named the Greenkhorn algorithm, that has a better practical performance than the Sinkhorn algorithm. Later, the complexity of the Greenkhorn algorithm  was improved to $\bigOtil(n^2/ \varepsilon^2)$ by a deeper analysis in ~\cite{Lin-2019-Efficient}. In order to accelerate the Sinkhorn and Greenkhorn algorithms,~\cite{Lin-2019-Acceleration} introduced Randkhorn and Gandkhorn algorithms that have complexity upper bounds of $\mathcal{O}(n^{7/3}/ \varepsilon^{4/3})$. These complexities are better than those of Sinkhorn and Greenkhorn algorithms in terms of the desired accuracy $\varepsilon$. A different line of algorithms for solving the OT problem is based on primal-dual algorithms. These algorithms include accelerated primal-dual gradient descent algorithm~\cite{Dvurechensky-2018-Computational}, accelerated primal-dual mirror descent algorithm~\cite{Lin-2019-Efficient}, and accelerated primal-dual coordinate descent algorithm~\cite{Guo-2019-Accelerated}. These primal-dual algorithms all have complexity upper bounds of $\bigOtil(n^{2.5}/ \varepsilon)$, which are better than those of Sinkhorn and Greenkhorn algorithms in terms of $\varepsilon$. Recently,~\cite{Jambulapati-2019-Direct, Blanchet-2018-Towards} developed algorithms with complexity upper bounds of $\bigOtil(n^{2}/ \varepsilon)$, which are believed to be optimal, based on either a dual extrapolation framework with area-convex mirror mapping or some black-box and specialized graph algorithms. However, these algorithms are quite difficult to implement. Therefore, they are less competitive than Sinkhorn and Greenkhorn algorithms in practice.

\textbf{Our Contribution.} While the complexity theory for OT has been rather well-understood, that for UOT is still nascent.  In the paper, we establish the complexity of approximating UOT between two discrete measures with at most $n$ components. We focus on the setting when the penalty functions are Kullback-Leiber (KL) divergences. Similar to the entropic regularized OT, in order to account for the scalability of computing UOT, we also consider an entropic version of UOT, which we refer to as \emph{entropic regularized UOT}. The Sinkhorn algorithm is widely used to solve the entropic regularized UOT~\cite{Chizat_Scaling_2016}; however, its complexity for approximating the UOT has not been studied. Our contribution is to prove  that the Sinkhorn algorithm has a complexity of 
\begin{align*}
\bigO \biggr( \frac{\tau (\alpha + \beta) n^2}{\varepsilon} \log(n) \biggr[ \log(\|C\|_{\infty}) + \log(\log(n)) & \\
& \hspace{-4 em} + \log \left( \frac{1}{\varepsilon} \right) \biggr] \biggr),
\end{align*}
where $C$ is a given cost matrix, $\alpha, \beta$ are total masses of the measures, and $\tau$ is a regularization parameter with the KL divergences in the UOT problem. This complexity is close to the probably optimal one  by a factor of logarithm of $n$ and $1/\varepsilon$. 

 The main difference between finding an $\varepsilon$-approximation solution  for OT and UOT by the Sinkhorn algorithm  is that the Sinkhorn algorithm for OT knows when it is close to the solution because of the constraints on the marginals, while the UOT does not have that advantage. Despite lacking that useful property, the Sinkhorn algorithm for UOT enjoys more freedom in its updates resulting in some interesting equations that relate the optimal value of the primal objective function of UOT to the masses of two measures (see Lemma~\ref{lemma:key_equation}). Those equations together with the geometric convergence rate of the dual solution of the UOT prove the almost optimal convergence of Sinkhorn algorithm to an $\varepsilon$-approximation solution of the UOT.  

\textbf{Organization.} The remainder of the paper is organized as follows. In Section~\ref{sec:unbalanced_OT}, we provide a setup for the regularized UOT with KL divergences in primal and dual forms, respectively. Based on the dual form, we show that the dual solution has a geometric convergence rate in Section~\ref{sec:complex_UOT}. We also show in Section \ref{sec:complex_UOT} that the Sinkhorn algorithm for the UOT  has a complexity of order  $\tilde{\bigO}(n^2/\varepsilon)$. Section \ref{sec:experiments} presents some empirical results confirming the complexity of the Sinkhorn algorithm. Finally, we conclude in Section \ref{sec:discussions} while deferring the proofs of remaining results in the supplementary material.

\textbf{Notation.} We let $[n]$ stand for the set $\{1, 2, \ldots, n\}$ while $\Rspace^n_+$ stands for the set of all vectors in $\Rspace^n$ with nonnegative components for any $n \geq 2$. For a vector $x \in \Rspace^n$ and $1 \leq p \leq \infty$, we denote $\|x\|_p$ as its $\ell_p$-norm and $\text{diag}(x)$ as the diagonal matrix with $x$ on the diagonal. The natural logarithm of a vector $\aA = (\aA_1,..., \aA_n) \in \mathbb{R}^n$ is denoted $\log \aA = (\log \aA_1,..., \log \aA_n)$. $\one_n$ stands for a vector of length $n$ with all of its components equal to $1$. $\partial_x f$ refers to a partial gradient of $f$ with respect to $x$. Lastly, given the dimension $n$ and accuracy $\varepsilon$, the notation $a = \bigO\left(b(n,\varepsilon)\right)$ stands for the upper bound $a \leq C \cdot b(n, \varepsilon)$ where $C$ is independent of $n$ and $\varepsilon$. Similarly, the notation $a = \bigOtil(b(n, \varepsilon))$ indicates the previous inequality may depend on the logarithmic function of $n$ and $\varepsilon$, and where $C>0$. 
\section{Unbalanced Optimal Transport with Entropic Regularization}
\label{sec:unbalanced_OT}
In this section, we present the primal and dual forms of the entropic regularized UOT problem and define an $\varepsilon$-approximation for the solution of the unregularized UOT. 

For any two positive vectors $\aA = (\aA_1,\ldots,\aA_n) \in \Br_{+}^{n}$ and $\bB  = (\bB_1,\ldots,\bB_n)\in \Br_{+}^{n}$, the UOT problem takes the form  $ \min_{X\in \Br_{+}^{n \times n}} f(X)  $ where
\begin{align}
    f(X) : = \left\langle C, X\right\rangle + \tau \KL(X \one_{n} || \aA) & \label{eq:unbalanced_OT} \\
    & \hspace{-0.5 em} + \tau \KL(X^{\top} \one_{n} || \bB). \nonumber
\end{align}
Here, $C$ is a given cost matrix, $X$ is a transportation plan, $\tau > 0$ is a given regularization parameter, and the $\KL$ divergence between vectors $\xX$ and $\yY$ is defined as 
 \begin{align*}
 \KL(\xX\|\yY) : =  \sum_{i=1}^n \xX_i \log \left( \frac{\xX_i}{\yY_i} \right) - \xX_i + \yY_i.
 \end{align*}
When $\aA^{\top} \one_{n} = \bB^{\top} \one_{n}$ and $\tau \to \infty$, the UOT problem becomes the standard OT problem. Similar to the original OT problem, the exact computation of UOT is expensive and not scalable in terms of the number of supports $n$. Inspired by the recent success of the entropic regularized OT problem as an efficient approximation of the OT problem, we also consider the entropic version of the UOT problem~\cite{frogner2015learning} , which we refer to as entropic regularized UOT, of finding $\min_{X\in \Br_{+}^{n \times n}} g(X)$, where   
\begin{align}
     g(X) : =  \left\langle C, X\right\rangle - \eta H(X) + \tau \KL(X \one_{n} || \aA) \nonumber\\
      + \tau \KL(X^{\top} \one_{n} || \bB). \label{eq:entro_unbalanced_OT}
\end{align}
Here, $\eta > 0$ is a given regularization parameter and $H(X)$ is an entropic regularization given by
\begin{align}
    H(X) : = - \sum_{i, j=1}^n X_{ij} ( \log ( X_{ij}) - 1). \label{eq:entro_formu}
\end{align}
For each $\eta > 0$, we can check that the entropic regularized UOT problem is strongly convex in $X$. Therefore, it is convenient to solve for the optimal solution of the entropic regularized UOT and uses it to approximate the original value of UOT. 
\begin{definition} \label{def:approx}
For any $\varepsilon > 0$, we call $X$ an $\varepsilon$-approximation transportation plan if the following holds
\begin{align*}
& \left\langle C, X\right\rangle + \tau \KL(X \one_{n} || \aA) + \tau \KL(X^{\top} \one_{n} || \bB) \\
& \leq \left\langle C, \widehat{X}\right\rangle + \tau \KL(\widehat{X} \one_{n} || \aA) + \KL(\widehat{X}^{\top} \one_{n} || \bB) + \varepsilon,
\end{align*}
where $\widehat{X}$ is an optimal transportation plan for the UOT problem~\eqref{eq:unbalanced_OT}.
\end{definition}
We aim to develop an algorithm to obtain $\varepsilon$-approximation transportation plan for the UOT problem~\eqref{eq:unbalanced_OT}. In order to do that, we consider the Fenchel-Legendre dual form of the entropic regularized UOT, which is given by
\begin{align*}
    \resizebox{\hsize}{!}{$\displaystyle \max_{u, v \in \Br^{n}} - F^{*}(- u) - G^{*}(- v) - \eta \sum_{i, j} \exp \left( \frac{u_{i} + v_{j} - C_{ij}}{\eta} \right)$,}
\end{align*}
where the functions $F^{*}(.)$ and $G^{*}(.)$ take the following forms:
{\footnotesize
\begin{align*}
    F^{*}(u) &= \displaystyle \max_{z \in \Br^{n}} z^{\top} u - \tau \KL(z||\aA) = \tau \left\langle e^{u/ \tau}, \aA \right\rangle - \aA^{\top}\one_{n}, \\
    G^{*}(v) &= \displaystyle \max_{x \in \Br^{n}} x^{\top} v - \tau \KL(x||\bB) = \tau \left\langle e^{v/ \tau}, \bB \right\rangle - \bB^{\top}\one_{n}.
\end{align*}
\par}
Since $\aA$ and $\bB$ are given non-negative vectors, finding the optimal solution for the above objective is equivalent to finding the optimal solution for the following objective
{\footnotesize
\begin{align}
\min_{u, v \in \Br^{n}} h(u, v) &:= \eta \sum_{i, j = 1}^n \exp \left( \frac{u_{i} + v_{j} - C_{ij}}{\eta} \right) \nonumber \\
& \quad + \tau \left\langle e^{- u/ \tau}, \aA \right\rangle + \tau \left\langle e^{- v/ \tau}, \bB \right\rangle. 
\label{eq:dual_unbalanced_OT} 
\end{align}
\par}
Problem~\eqref{eq:dual_unbalanced_OT} is referred to as the \textit{dual entropic regularized UOT}. 
\section{Complexity Analysis of Approximating Unbalanced Optimal Transport}
\label{sec:complex_UOT}
In this section, we provide a complexity analysis of the Sinkhorn algorithm for approximating UOT solution. We start with some notations and useful quantities followed by the lemmas and main theorems.
\subsection{Notations and Assumptions}
We first denote $\sum_{i=1}^n \aA_i = \alpha$, $\sum_{j=1}^n \bB_j = \beta$. For each $u,v\in \mathbb{R}^{n}$, its corresponding optimal transport in the dual form \eqref{eq:dual_unbalanced_OT} is denoted by  $B(u,v)$, where $B(u,v) : = \diag(e^{u/ \eta}) \ e^{-\frac{C}{\eta}} \ \diag(e^{v/ \eta})$. 
The corresponding solution in \eqref{eq:entro_unbalanced_OT} is denoted by $X = B(u,v)$. Let $a = B(u,v) \one_n$, $b = B(u,v)^{\top} \one_n$ and $\sum_{i,j=1}^n X_{ij} = x$.\\
Let $(u^k, v^k)$ be the solution returned at the $k$-th iteration of the Sinkhorn algorithm and $(u^*,v^*)$ be the optimal solution of \eqref{eq:dual_unbalanced_OT}. Following the above scheme, we also define $X^k, a^k, b^k$, $x^k$ and $X^*, a^*, b^*$, $x^*$ correspondingly. Additionally, we define $\widehat{X}$ to be the optimal solution of the unregularized objective \eqref{eq:unbalanced_OT} and $\sum_{i,j=1}^n \widehat{X}_{ij} = \widehat{x}$.

Different from the standard OT problem, the optimal solutions of the entropic regularized UOT and our complexity analysis overall also depend on the masses $\alpha, \beta$ and the $\KL$ regularization parameter $\tau$. Given that, we will assume the following simple and standard regularity conditions throughout the paper.

\textbf{Regularity Conditions:}
\begin{itemize}
\item[(A1)] $\alpha$,  $\beta, \tau $ are positive constants.
\item[(A2)] $C$ is a matrix of non-negative entries.
\end{itemize}

Before presenting the main theorem and analysis, we define some quantities that will be used in our analysis and quantify their magnitudes under the above regularity conditions.

\textbf{List of Quantities:} 
{\footnotesize
\begin{align}
    R &= \max\Big\{ \mnorm{\log(\aA)} , \mnorm{\log(\bB)} \Big\} \nonumber\\
    & \quad + \max \left\{ \log(n),  \frac{1}{\eta}\mnorm{C} - \log(n) \right\}, \label{quan:R}\\
    \Delta_k &=  \max\Big\{\|u^k-u^*\|_{\infty},\|v^k - v^*\|_{\infty} \Big\}, \label{quan:Delta_k} \\
    \Lambda_k &= \tau \Big(\frac{\tau}{\tau + \eta} \Big)^{k} R . \label{quan:Lambda_k}\\
    S & = \frac{1}{2}(\alpha+\beta) + \frac{1}{2} + \frac{1}{4\log(n)}, \label{quan:S}\\
    T &= \left( \frac{\alpha+\beta}{2} \right) \left[\log \left( \frac{\alpha+\beta}{2} \right) \nonumber + 2\log(n)-1\right] \\
    & \quad + \log(n) + \frac{5}{2}, \label{quan:T} \\
    U&= \max \Big\{S + T , 2\varepsilon , \frac{4\varepsilon \log(n)}{\tau}, \frac{4\varepsilon (\alpha+\beta) \log(n)}{\tau}\Big\}. \label{quan:U}
\end{align}
\par}
As we shall see, the quantities $\Delta_k$ and $\Lambda_k$ are used to establish the convergence rate of $(u_k, v_k)$. We now consider the order of $R, S$ and $T$. Since the order of the penalty function $\eta H(X)$ is $\bigO(\eta\log(n))$ and should be small for a good approximation, $\eta$ is often chosen such that  $\eta \log(n)$ is sufficiently small (see our choice of $\eta = \frac{\varepsilon}{U}$ in Theorem \ref{theorem:Sinkhorn_to_optimal}). Therefore, we can assume  the dominant factor in the second term of $R$ is  $\frac{1}{\eta} \|C\|_{\infty}$. If $\alpha = \sum_{i=1}^n \aA_i$ is a positive constant, then we can expect that $\aA_i$  is as small as $\bigO(n^{-\kappa})$ for a constant $\kappa \geq 1$. In this  case,  $\|\log(\aA)\|_{\infty} = \bigO(\log(n))$. Overall, we can assume that $R = \bigO\big(\frac{1}{\eta} \|C\|_{\infty}\big)$ and if $\alpha, \beta$ and $\tau$ are positive constants, then $S= \bigO\big(1 \big)$ and $T = \bigO\big(\log(n)\big)$.
 
\subsection{Sinkhorn Algorithm}
\label{subsec:Sinkhorn}
The Sinkhorn algorithm \cite{Chizat_Scaling_2016} alternatively minimizes the dual function in \eqref{eq:dual_unbalanced_OT} with respect to $u$ and $v$. Suppose we are at iteration $k+1$ for $k\ge 0$ and $k$ even, by setting the gradient to $0$ we can see that given fixed $v^k$, the update $u^{k+1}$ that minimizes the function in \eqref{eq:dual_unbalanced_OT} satisfies
\begin{align*}
\exp \left( \frac{u^{k+1}_{i}}{\eta} \right) \sum_{j = 1}^{n} \exp \left( \frac{v^{k}_{j} - C_{ij}}{\eta} \right) = \exp \left(-\frac{u^{k+1}_{i}}{\tau} \right) \aA_{i}.
\end{align*}
 Multiplying both sides by $\exp\left(\frac{u_i^k}{\eta}\right)$, we get:
\begin{align*}
\exp \biggr( \frac{u^{k+1}_{i}}{\eta} \biggr) a_i^k = \exp\left(\frac{u_i^k}{\eta}\right) \exp \left( -\frac{u^{k+1}_{i}}{\tau} \right) \aA_{i}.
\end{align*}
Similarly with $u^k$ fixed and $k$ is odd:
\begin{equation*}
\exp \biggr( \frac{v^{k+1}_{j}}{\eta} \biggr) b_j^k = \exp\left(\frac{v_j^k}{\eta}\right) \exp \left( -\frac{v^{k+1}_{j}}{\tau} \right) \bB_{j}.
\end{equation*}
These equations translate to the pseudocode of Algorithm ~\ref{Algorithm:Sinkhorn}, in which we have included our stopping condition stated in Theorem~\ref{theorem:Sinkhorn_to_optimal}.
\begin{algorithm}[t]
\caption{UNBALANCED\_SINKHORN$\left(C, \varepsilon\right)$} \label{Algorithm:Sinkhorn}
\begin{algorithmic}
\STATE \textbf{Input:} $k = 0$ and $u^0 = v^0 = 0$ and  $\eta = \varepsilon/U$ 
\WHILE{\resizebox{0.78\hsize}{!}{$k \le \Big(\frac{\tau U}{\varepsilon} + 1\Big) \Big[\log(8\eta R \big) + \log(\tau(\tau+1)) + 3\log(\frac{U}{\varepsilon}) \Big]$}}
\STATE $a^{k} = B(u^k, v^k) \one_n$. 
\STATE $b^{k} = B(u^k, v^k)^\top \one_n$. 
\IF{$k$ is even}
\STATE $u^{k+1} = \biggr[ \dfrac{u^k}{\eta} + \log\left(\aA\right) - \log\left(a^{k} \right)\biggr] \dfrac{ \eta \tau}{\eta + \tau}$
\STATE $v^{k + 1} = v^{k}$
\ELSE
\STATE $v^{k+1} = \biggr[ \dfrac{v^k}{\eta} + \log\left(\bB \right) - \log\left(b^{k} \right)\biggr] \dfrac{ \eta \tau}{\eta + \tau}$
\STATE $u^{k + 1} = u^{k}$.
\ENDIF
\STATE $k = k + 1$. 
\ENDWHILE
\STATE \textbf{Output:} $B(u^k, v^k)$.  
\end{algorithmic}
\end{algorithm} 
We now present the main theorems.
\begin{theorem}
\label{theorem:bound_optimal_solution}
For any $k \geq 0$, the update $(u^{k+1}, v^{k+1})$ from Algorithm~\ref{Algorithm:Sinkhorn} satisfies the following bound
\begin{equation}
\begin{split}
\Delta_{k + 1} \leq \Lambda_k,
\end{split}
\label{eq:bound_optimal_solution}
\end{equation}
where $\Delta_k$ and  $\Lambda_k$ are defined in \eqref{quan:Delta_k} and \eqref{quan:Lambda_k}, respectively.
\end{theorem}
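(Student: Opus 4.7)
The plan combines (i) a one-step contraction of the Sinkhorn map toward the dual optimum in the sup-norm, with ratio $\tau/(\eta+\tau)$, and (ii) an \emph{a priori} sup-norm bound $\max\{\|u^*\|_\infty,\|v^*\|_\infty\} \le \tau R$. Iterating the first with the second as the base case delivers $\Delta_{k+1} \le \tau R \cdot (\tau/(\eta+\tau))^{k} = \Lambda_k$.

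For the one-step contraction, I would first eliminate $u^k$ from the Sinkhorn update of Algorithm~\ref{Algorithm:Sinkhorn} to obtain the closed form, valid for $k$ even,
\[
u^{k+1}_i = \frac{\eta\tau}{\eta+\tau}\Bigl[\log \aA_i - \log \sum_{j=1}^n \exp\!\bigl((v^k_j - C_{ij})/\eta\bigr)\Bigr],
\]
and notice that the same formula with $(u^{k+1},v^k)$ replaced by $(u^*,v^*)$ is exactly the first-order condition $\partial_u h(u^*,v^*) = 0$. Subtracting the two identities and applying the sup-norm Lipschitz bound $|\log\sum_j e^{a_j} - \log\sum_j e^{b_j}| \le \|a-b\|_\infty$ yields $\|u^{k+1}-u^*\|_\infty \le \tfrac{\tau}{\eta+\tau}\|v^k-v^*\|_\infty$, together with $v^{k+1} = v^k$; the odd case is symmetric. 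A short induction (the ``fresh'' coordinate is always smaller than the stale one, since it is the stale one shrunk by $\tau/(\eta+\tau)$) then gives the clean recursion $\Delta_{k+1} \le \tfrac{\tau}{\eta+\tau}\Delta_k$ for all $k \ge 1$, while $\Delta_1 \le \|v^*\|_\infty$ from $u^0 = v^0 = 0$.

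The main obstacle is establishing the \emph{a priori} bound $\max\{\|u^*\|_\infty,\|v^*\|_\infty\} \le \tau R$. Setting $s_i := \log \sum_j \exp((v^*_j - C_{ij})/\eta)$, I would sandwich $s_i$ by combining the elementary upper bound $\log \sum_j e^{a_j} \le \log n + \max_j a_j$ with the Jensen lower bound $\log \sum_j e^{a_j} \ge \log n + \operatorname{avg}_j a_j$, together with the hypothesis $C \ge 0$. This produces
\[
|s_i| \le \frac{\|v^*\|_\infty}{\eta} + \max\!\left\{\log n,\ \frac{\|C\|_\infty}{\eta} - \log n\right\},
\]
in which the ``$-\log n$'' correction---exactly the one appearing in the definition of $R$ in~\eqref{quan:R}---comes from the Jensen lower bound. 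Plugging this into the closed form for $u^*$ gives $\|u^*\|_\infty \le \tfrac{\eta\tau}{\eta+\tau}R + \tfrac{\tau}{\eta+\tau}\|v^*\|_\infty$, and by symmetry the analogous inequality for $\|v^*\|_\infty$; taking the larger of the two sup-norms and rearranging---the coefficient $1 - \tau/(\eta+\tau) = \eta/(\eta+\tau)$ cancels the $\eta$ in $\eta\tau/(\eta+\tau)$---yields $\max\{\|u^*\|_\infty,\|v^*\|_\infty\} \le \tau R$.

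Assembling the two pieces gives $\Delta_{k+1} \le (\tau/(\eta+\tau))^{k}\cdot \tau R = \Lambda_k$, as claimed. The hardest part is clearly the a priori bound, since it requires the self-consistent elimination of $\|v^*\|_\infty$ from the right-hand side via the symmetric inequality for $v^*$; the precise form of $R$ in~\eqref{quan:R} is tuned so that the Jensen-based correction $-\log n$ matches the worst-case bound and no spurious logarithmic factor slips in.
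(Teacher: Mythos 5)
Your proof is correct and takes essentially the same approach as the paper: a one-step sup-norm contraction of the Sinkhorn map with ratio $\tau/(\tau+\eta)$ (obtained from the Lipschitz property of $\log\sum_j \exp(\cdot)$ in the $\ell_\infty$ norm), combined with the a priori bound $\max\{\|u^*\|_\infty,\|v^*\|_\infty\} \le \tau R$ derived by self-consistently eliminating one sup-norm against the other. The only cosmetic differences are that you eliminate $u^k$ from the update equation before comparing to the fixed point, whereas the paper keeps the marginals $a^k, a^*$ as intermediaries and handles the same cancellation via Lemmas~\ref{lemma:optimal_solution} and~\ref{lemma:kth_inequality}, and that you invoke Jensen for the lower bound on $\log\sum_j e^{a_j}$ where the paper uses the $\min$-based bound; both yield exactly the same constants and the same $R$.
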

\begin{remark} Theorem \ref{theorem:bound_optimal_solution} establishes a geometric convergence rate for the dual solution $(u^k,v^k)$ under $\ell_{\infty}$ norm. Its geometric convergence is similar to the work of~\cite{sejourne2019UOT} while it is different from the work of~\cite{Chizat_Scaling_2016}, which used the Thompson metric. The difference between the result of Theorem 1 and that in~\cite{sejourne2019UOT} is that we obtain a specific upper bound for the convergence rate of the Sinkhorn updates, which depends explicitly on the number of components $n$ and all other parameters of masses and penalty functions. The convergence rate of Theorem~\ref{theorem:bound_optimal_solution} plays an important role in the complexity analysis of the Sinkhorn algorithm in the next theorem.
\end{remark}

\begin{theorem}\label{theorem:Sinkhorn_to_optimal}
Under the regularity conditions (A1-A2), with $R$ and $U$ defined in \eqref{quan:R} and \eqref{quan:U} respectively, for $\eta = \frac{\varepsilon}{U}$ and
\begin{align*}
k \ge 1 +  \left(\frac{\tau U}{\varepsilon} + 1 \right) \biggr[ \log(8\eta R \big) + \log(\tau(\tau+1)) & \\
& \hspace{- 3 em} + 3\log (\frac{U}{\varepsilon}) \biggr],
\end{align*} 
the update $X^k$ from Algorithm~\ref{Algorithm:Sinkhorn} is an $\varepsilon$-approximation of the optimal solution $\widehat{X}$ of \eqref{eq:unbalanced_OT}.
\end{theorem}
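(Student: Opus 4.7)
The plan is to split the approximation gap via the identity $f(X) = g(X) + \eta H(X)$ that defines the entropic regularization. Writing $X^{*}=B(u^{*},v^{*})$ for the entropic minimizer and using $g(X^{*})\le g(\widehat X)$ together with $f(\widehat X)=g(\widehat X)+\eta H(\widehat X)$, I obtain the decomposition
\begin{align*}
f(X^{k})-f(\widehat X)\le\bigl[g(X^{k})-g(X^{*})\bigr]+\eta\bigl[H(X^{k})-H(\widehat X)\bigr].
\end{align*}
It then suffices to bound each piece by $\varepsilon/2$: $\eta$ is chosen small so that the entropy piece vanishes, and $k$ is chosen large so that Theorem~\ref{theorem:bound_optimal_solution} forces the regularized gap below $\varepsilon/2$.

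For the entropy piece, I use the elementary bound $-\sum X_{ij}\log X_{ij}\le x\log(n^{2}/x)$ for a nonnegative matrix of total mass $x$, together with a matching lower bound. The total masses $x^{k}$ and $\widehat x$ are controlled by $O(\alpha+\beta)$ using the primal-mass identity in Lemma~\ref{lemma:key_equation}; after substitution, the resulting bound for $|H(X^{k})|+|H(\widehat X)|$ matches $S+T$ from \eqref{quan:S}, \eqref{quan:T}. The choice $\eta=\varepsilon/U$ with $U\ge S+T$ then yields $\eta|H(X^{k})-H(\widehat X)|\le\varepsilon/2$, and the remaining floors in the definition of $U$ absorb side-terms that appear in the next step.

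For the regularized gap $g(X^{k})-g(X^{*})$, I plug the identity $\eta\log B(u,v)_{ij}=u_{i}+v_{j}-C_{ij}$ into $g$ to obtain the closed form
\begin{align*}
g(B(u,v))=\sum_{i}a_{i}u_{i}+\sum_{j}b_{j}v_{j}-\eta x+\tau\KL(a\|\aA)+\tau\KL(b\|\bB).
\end{align*}
Since $(u^{*},v^{*})$ is stationary for $g\circ B$, a second-order Taylor expansion around it yields $g(X^{k})-g(X^{*})\le\tfrac{L}{2}\Delta_{k}^{2}$ for an explicit constant $L$ depending polynomially on $\tau$, $\eta^{-1}$, $\alpha$, $\beta$ and on a uniform bound for $(\|u^{k}\|_{\infty},\|v^{k}\|_{\infty})$, which is granted by $\|u^{*}\|_{\infty},\|v^{*}\|_{\infty}\le R$ and Theorem~\ref{theorem:bound_optimal_solution}. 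Combining with the geometric decay $\Delta_{k}\le\tau\bigl(\tfrac{\tau}{\tau+\eta}\bigr)^{k-1}R$ and solving $\tfrac{L}{2}\Delta_{k}^{2}\le\varepsilon/2$, while using $\log(1+\eta/\tau)\ge\eta/(\tau+\eta)=\Theta(\varepsilon/(\tau U))$, produces exactly the iteration count stated in the theorem.

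The main obstacle is step 3: controlling the smoothness constant $L$ so that $\log L$ contributes only a logarithmic term to the iteration count rather than a polynomial one. Because $B(u,v)$ contains $e^{u/\eta}$ with $\eta=\varepsilon/U$ small, naive Lipschitz bounds explode; the saving grace is that the iterates $(u^{k},v^{k})$ remain in a bounded $R$-neighborhood of $(u^{*},v^{*})$, so every exponential is controlled by constants depending on $R$ but not on $\eta$. Carefully assembling these constants so that $\log L$ reproduces the bracketed expression $\log(8\eta R)+\log(\tau(\tau+1))+3\log(U/\varepsilon)$ is the delicate bookkeeping piece; once it is done, steps 1 and 2 are comparatively routine and the four estimates combine to give $f(X^{k})-f(\widehat X)\le\varepsilon$, verifying Definition~\ref{def:approx}.
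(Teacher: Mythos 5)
Your decomposition $f(X^{k})-f(\widehat X)\le[g(X^{k})-g(X^{*})]+\eta[H(X^{k})-H(\widehat X)]$ and your treatment of the entropy piece via mass bounds are essentially the paper's first two moves. Where you diverge is the regularized gap: the paper never Taylor-expands anything. Instead, it observes that $X^{k}=B(u^{k},v^{k})$ is the \emph{exact} minimizer of a modified objective $g^{k}$ in which $\bB$ is replaced by $\bB^{k}=e^{v^{k}/\tau}\odot b^{k}$, then invokes Lemma~\ref{lemma:key_equation} twice to get closed-form values $g^{k}(X^{k})=\tau(\alpha+\beta^{k})-(2\tau+\eta)x^{k}$ and $g(X^{*})=\tau(\alpha+\beta)-(2\tau+\eta)x^{*}$, and reduces $g(X^{k})-g(X^{*})$ to two explicit scalar terms, $-(2\tau+\eta)(x^{k}-x^{*})$ and $\tau\sum_{j}b^{k}_{j}\log(\bB^{k}_{j}/\bB_{j})$, each bounded \emph{linearly} in $\Delta_{k}/\eta$ by Lemmas~\ref{lemma:kth_inequality} and~\ref{lemma:bound_xk}. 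Your route is genuinely different: a generic second-order expansion of $g\circ B$ at its stationary point, giving a bound quadratic in $\Delta_{k}$.

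The gap is precisely the step you flag as ``delicate bookkeeping'': you have not bounded $L$, and the claim that ``every exponential is controlled by constants depending on $R$ but not on $\eta$'' is not true as stated. The Hessian of $g\circ B$ contains second derivatives of $a_{i}(u,v)=e^{u_{i}/\eta}\sum_{j}e^{(v_{j}-C_{ij})/\eta}$, and knowing only $\|u\|_{\infty},\|v\|_{\infty}\lesssim\tau R$ (note: Lemma~\ref{lemma:sup_norm} gives $\tau R$, not $R$) gives $a_{i}\lesssim n\,e^{2\tau R/\eta}$, which is exponential in $1/\eta$. To salvage the bound you must work with aggregate quantities along the Sinkhorn trajectory --- $x$, $a_{i}$, $b_{j}$, $a_{i}|\log(a_{i}/\aA_{i})|$, controlled via the fixed-point/update equations and the mass bound $x\le S$, and then argue convexity of log-sum-exp to extend those bounds to the Taylor midpoint $\tilde w$ --- and also be careful about the operator norm used, since $(w^{k}-w^{*})^{\top}\nabla^{2}(g\circ B)(\tilde w)(w^{k}-w^{*})\le\|\nabla^{2}(g\circ B)\|_{1,1}\Delta_{k}^{2}$ is what you actually need to avoid stray factors of $n$. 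None of this is carried out, so as written the proof does not close; the heavy lifting that the paper does through the $g^{k}$ device and Lemma~\ref{lemma:key_equation} is exactly what you have relegated to ``once it is done.'' A smaller slip: the entropy piece is bounded by $\eta T$, not $\varepsilon/2$ (there is no guarantee $T\le S$); the paper's bookkeeping gives $\eta S+\eta T\le\varepsilon$ via $\eta=\varepsilon/U\le\varepsilon/(S+T)$ rather than two $\varepsilon/2$ halves.
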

 
The next corollary sums up the complexity of Algorithm \ref{Algorithm:Sinkhorn}.
\begin{corollary}\label{corollary:complexity}
Under conditions (A1-A2) and assume that $R = \bigO \left(\frac{1}{\eta}\|C\|_{\infty} \right), S = \bigO(1)$ and $T = \bigO(\log(n))$. Then the complexity of Algorithm \ref{Algorithm:Sinkhorn}  is 
\begin{align*}
\bigO \biggr(\dfrac{\tau (\alpha + \beta) n^2}{\varepsilon} \log(n) \biggr[ \log(\mnorm{C}) + \log(\log(n)) & \\
& \hspace{- 4 em} + \log \left( \frac{1}{\varepsilon} \right) \biggr] \biggr).
\end{align*}
\end{corollary}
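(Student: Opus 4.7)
The plan is to combine Theorem~\ref{theorem:Sinkhorn_to_optimal} (which bounds the iteration count) with the per-iteration cost of Algorithm~\ref{Algorithm:Sinkhorn}. Each iteration performs one matrix-vector product involving the $n\times n$ matrix $B(u^k,v^k)$ (to form $a^k$ or $b^k$) together with an entrywise logarithm, subtraction, and scaling, so the cost per iteration is $\bigO(n^2)$. The total complexity is therefore the product of $\bigO(n^2)$ and the iteration bound furnished by Theorem~\ref{theorem:Sinkhorn_to_optimal}, and the work reduces to evaluating each of the quantities $R$, $S$, $T$, $U$, $\eta$ under the stated assumptions.

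First I would bound $U$. From the definitions of $S$ and $T$ and the hypotheses $S=\bigO(1)$, $T=\bigO(\log n)$, keeping the dependence on $(\alpha+\beta)$ explicit gives $S+T = \bigO((\alpha+\beta)\log n)$. The remaining three terms in the $\max$ defining $U$ are proportional to $\varepsilon$ with constants depending only on $\alpha,\beta,\tau$; for the regime of interest (small $\varepsilon$) they are dominated by $S+T$, so $U=\bigO((\alpha+\beta)\log n)$. Consequently $\eta = \varepsilon/U$ and the leading prefactor in Theorem~\ref{theorem:Sinkhorn_to_optimal} satisfies $\tau U/\varepsilon = \bigO(\tau(\alpha+\beta)\log(n)/\varepsilon)$.

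Next I would estimate the three logarithmic terms inside the bracket of Theorem~\ref{theorem:Sinkhorn_to_optimal}. By the assumption $R = \bigO(\|C\|_\infty/\eta)$, we get $\eta R = \bigO(\|C\|_\infty)$, so $\log(8\eta R) = \bigO(\log\|C\|_\infty)$. Since $\tau$ is a positive constant, $\log(\tau(\tau+1)) = \bigO(1)$. Finally, $U/\varepsilon = \bigO((\alpha+\beta)\log(n)/\varepsilon)$, hence $3\log(U/\varepsilon) = \bigO(\log\log n + \log(1/\varepsilon))$. Adding these contributions, the bracketed factor is
\[
\bigO\!\bigl(\log\|C\|_\infty + \log\log n + \log(1/\varepsilon)\bigr).
\]

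Multiplying the iteration count
\[
\bigO\!\left(\frac{\tau(\alpha+\beta)\log n}{\varepsilon}\bigl[\log\|C\|_\infty + \log\log n + \log(1/\varepsilon)\bigr]\right)
\]
by the per-iteration cost $\bigO(n^2)$ yields the advertised bound. The argument is essentially bookkeeping; the only mild subtlety is identifying which of the four terms inside $U$ dominates in the small-$\varepsilon$ regime and propagating the (constant but tracked) factor $(\alpha+\beta)$ consistently through $U$, $\eta R$, and $U/\varepsilon$ so that the final expression matches the stated form exactly.
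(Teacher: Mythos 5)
Your proof is correct and follows essentially the same route as the paper: bound $U=\bigO((\alpha+\beta)\log n)$ from the definitions of $S$ and $T$, estimate each logarithmic term in the iteration bound of Theorem~\ref{theorem:Sinkhorn_to_optimal}, and multiply by the $\bigO(n^2)$ per-iteration cost. The only cosmetic difference is that you argue which term of the $\max$ defining $U$ dominates, whereas the paper simply upper-bounds the $\max$ by a sum; both yield the same estimate.
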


\begin{proof}[Proof of Corollary \ref{corollary:complexity}]
By the assumptions on the order of $R, S$, $T$, we have
\begin{align*}
    R = \bigO(\frac{1}{\eta}\mnorm{C}), S = \bigO(\alpha + \beta), T = \bigO((\alpha+\beta)\log(n)).
\end{align*}
Plugging the above results into the definition of $U$ in \eqref{quan:U}, we find that
\begin{align*}
     U = \bigO(S) + \bigO(T) + \varepsilon \bigO(\log(n)) = \bigO((\alpha+\beta)\log(n)).
\end{align*}
Overall, we obtain
\begin{align*}
    k = \bigO \biggr( \frac{\tau (\alpha + \beta) \log(n)}{\varepsilon} \biggr) \biggr[\log(\|C\|_{\infty}) + \log(\log(n)) & \\
    & \hspace{-5 em} + \log \left( \frac{1}{\varepsilon} \right)  \biggr] \biggr).
\end{align*}
By multiplying the above bound of $k$ with $\bigO(n^2)$ arithmetic operations per iteration, we obtain the desired final complexity of the Sinkhorn algorithm as being claimed in the conclusion of the corollary.
\end{proof}
\begin{remark} 
Since $\alpha$, $\beta$ and $\tau$ are positive constants, we obtain the complexity $\bigOtil\left(n^2/\epsilon\right)$ as stated in the abstract. In comparison to the best known OT's complexity of the similar order of $n$, i.e.,~\cite{Dvurechensky-2018-Computational, Lin-2019-Efficient}, our complexity for the UOT is better by a factor of $\varepsilon$. Meanwhile, among the practical algorithms for  OT which have complexity of the order of $n^{7/3}$, i.e., Gankhorn and Randkhorn algorithms~\cite{Lin-2019-Acceleration}, our bound is better by a factor of $n^{1/3}$. 
\end{remark} 

\subsection{Proof of Theorem~\ref{theorem:bound_optimal_solution}}
 The analysis of Sinkhorn algorithm for approximating unbalanced optimal transport is different from that of optimal transport, since $\aA$ and $\bB$ need not be probability measures. Our proof of Theorem \ref{theorem:bound_optimal_solution} requires the geometric convergence rate of $(u^k, v^k)$ and an upper bound on the supremum norm of the optimal dual solution $(u^*, v^*)$. The latter result is presented in Lemma \ref{lemma:sup_norm}. Before stating that result formally, we need the following lemmas.
\begin{lemma}
\label{lemma:optimal_solution}
The optimal solution $(u^*, v^*)$ of dual entropic regularized UOT~\eqref{eq:dual_unbalanced_OT} satisfies the following equations:
\begin{equation*}
\frac{u^*}{\tau} = \log(\aA) - \log(a^*), \ \text{and} \ \frac{v^*}{\tau} = \log(\bB) - \log(b^*). 
\end{equation*}
\end{lemma}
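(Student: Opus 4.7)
My plan is to prove the lemma directly from the first-order optimality (KKT) conditions of the unconstrained dual problem \eqref{eq:dual_unbalanced_OT}. Since $h(u,v)$ is a sum of exponentials and hence smooth, and since it is jointly convex in $(u,v)$ (each exponential term is convex, as are the separable terms $\tau \langle e^{-u/\tau}, \aA\rangle$ and $\tau \langle e^{-v/\tau}, \bB\rangle$ because $\aA,\bB$ are componentwise nonnegative), any stationary point is a global minimizer. Thus it suffices to compute $\nabla h$ and set it to zero at $(u^*,v^*)$.

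First I would compute the partial derivative with respect to $u_i$:
\begin{equation*}
\partial_{u_i} h(u,v) = \sum_{j=1}^{n} \exp\!\left(\frac{u_i + v_j - C_{ij}}{\eta}\right) - \aA_i \, e^{-u_i/\tau}.
\end{equation*}
By the definition of $B(u,v)$ and $a = B(u,v)\one_n$, the first sum on the right is exactly $a_i$. Setting this partial derivative to zero at $(u^*,v^*)$ yields $a^*_i = \aA_i \, e^{-u^*_i/\tau}$. Since $\aA_i > 0$ (from the setup) and $a^*_i > 0$ (it is a sum of strictly positive exponentials), I can take logarithms on both sides to get
\begin{equation*}
\log(a^*_i) = \log(\aA_i) - \frac{u^*_i}{\tau}, \qquad \text{i.e.,} \qquad \frac{u^*_i}{\tau} = \log(\aA_i) - \log(a^*_i),
\end{equation*}
which is the first claimed identity in vector form.

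Exactly the same calculation for $\partial_{v_j} h$ gives $b^*_j = \bB_j \, e^{-v^*_j/\tau}$, and taking logs yields $v^*_j/\tau = \log(\bB_j) - \log(b^*_j)$, which is the second identity. There is no real obstacle here beyond this routine differentiation; the only subtlety worth noting is the positivity of $\aA, \bB$ that justifies taking the logarithm, and the convexity of $h$ which guarantees that this stationarity condition does indeed characterize the (unique) dual optimum rather than merely being necessary.
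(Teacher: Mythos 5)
Your proof is correct. You argue directly from the first-order stationarity condition $\nabla h(u^*,v^*)=0$: computing $\partial_{u_i}h = a_i - \aA_i e^{-u_i/\tau}$ and setting it to zero gives $a_i^* = \aA_i e^{-u_i^*/\tau}$, hence the claimed identity after taking logarithms, and likewise for $v$. The paper instead observes that $(u^*,v^*)$ is a fixed point of the Sinkhorn update and plugs it into the update rule $u^* = \bigl[\frac{u^*}{\eta} + \log(\aA) - \log(a^*)\bigr]\frac{\eta\tau}{\eta+\tau}$, then simplifies. These two routes are closely related, since the Sinkhorn update was itself obtained by solving the partial first-order condition $\partial_{u_i} h(u^{k+1},v^k)=0$, so the fixed-point equation is a rearranged encoding of the same stationarity condition. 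Your version is somewhat more self-contained: it does not require the reader to have already accepted that a fixed point of alternating minimization is the global dual optimum, and it avoids the extra algebraic cancellation of the $\frac{\eta\tau}{\eta+\tau}$ factor. The paper's version is shorter because it reuses the Sinkhorn update already on display. Both correctly rely on joint convexity of $h$ (which you explicitly note) and on strict positivity of $\aA,\bB,a^*,b^*$ to take logarithms; the paper leaves those points implicit.
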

\begin{proof}
Since $(u^*,v ^*)$ is a fixed point of the updates in the Algorithm~\ref{Algorithm:Sinkhorn}, we get
\begin{align*}
    u^* &= \left[\frac{u^*}{\eta} + \log(\aA) - \log(a^*) \right] \frac{\eta \tau}{\eta + \tau}.
\end{align*}
This directly leads to the stated equality for $u^*$, and that for $v^*$ can be obtained similarly. 
\end{proof}

\begin{lemma}
\label{lemma:kth_inequality}
Assume that the regularity conditions (A1-A2) hold. Then, the following are true:
\begin{align*}
\begin{split}
 &(a)\quad  \Big| \log\left( \frac{a_i^*}{a_i^k} \right) -  \frac{u_i^* - u_i^k}{\eta}\Big| \leq  \max_{1 \leq j \leq n} \frac{|v^*_j - v^{k}_j|}{\eta},\\
 & (b)\quad  \Big| \log\left( \frac{b_j^*}{b_j^k} \right) -  \frac{v_j^* - v_j^k}{\eta}\Big| \leq  \max_{1 \leq i \leq n} \frac{|u^*_i - u^{k}_i|}{\eta}.
\end{split}
\end{align*}
\end{lemma}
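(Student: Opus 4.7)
The proof is essentially a direct computation from the definition of $B(u,v)$, so I do not anticipate any serious obstacle; the plan is to make the cancellation explicit and then apply a one-line log-sum-exp estimate.

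The first step is to expand $a_i^k$ using $B(u^k,v^k) = \diag(e^{u^k/\eta})\, e^{-C/\eta}\, \diag(e^{v^k/\eta})$, which gives
\[
a_i^k \;=\; e^{u_i^k/\eta}\sum_{j=1}^n e^{(v_j^k - C_{ij})/\eta},
\]
and similarly $a_i^* = e^{u_i^*/\eta}\sum_{j=1}^n e^{(v_j^* - C_{ij})/\eta}$. Taking logarithms of the ratio makes the $u$-contribution cancel cleanly, so that
\[
\log\!\left(\frac{a_i^*}{a_i^k}\right) - \frac{u_i^* - u_i^k}{\eta}
\;=\; \log\!\left(\frac{\sum_{j} e^{(v_j^* - C_{ij})/\eta}}{\sum_{j} e^{(v_j^k - C_{ij})/\eta}}\right).
\]
Thus part (a) reduces to controlling this log-sum-exp difference in terms of $\max_j |v_j^* - v_j^k|/\eta$.

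For the second step, set $M := \max_{1\le j \le n}|v_j^* - v_j^k|/\eta$ and let $w_j := e^{-C_{ij}/\eta} > 0$. Writing $e^{v_j^*/\eta} = e^{v_j^k/\eta}\cdot e^{(v_j^*-v_j^k)/\eta}$, the pointwise bound $e^{-M} \le e^{(v_j^*-v_j^k)/\eta} \le e^{M}$ implies
\[
e^{-M}\sum_j w_j e^{v_j^k/\eta} \;\le\; \sum_j w_j e^{v_j^*/\eta} \;\le\; e^{M}\sum_j w_j e^{v_j^k/\eta}.
\]
Taking logarithms of the ratio gives the two-sided bound $|\,\cdot\,| \le M$, which is exactly the claim of part (a). Part (b) follows by the symmetric argument, using $b_j^k = e^{v_j^k/\eta}\sum_{i} e^{(u_i^k - C_{ij})/\eta}$ so that the $v$-contribution cancels and the bound becomes $\max_i |u_i^* - u_i^k|/\eta$.

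The only subtlety worth flagging is that we are using the exact definitions $a^k = B(u^k,v^k)\one_n$ and $a^* = B(u^*,v^*)\one_n$ rather than any property of the Sinkhorn iterates — the lemma is purely a statement about how the marginal of $B(u,v)$ depends on perturbations of $(u,v)$. Nothing in the argument relies on Lemma~\ref{lemma:optimal_solution} or on $(u^*,v^*)$ being optimal, and the positivity of the weights $w_j$ (from non-negativity of $C$ in (A2), though strict positivity of $w_j$ already holds from $w_j = e^{-C_{ij}/\eta}$ regardless of sign) is what licenses the monotone sandwich in the second step.
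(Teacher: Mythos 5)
Your proof is correct and takes essentially the same approach as the paper: cancel the $u$-contribution in $\log(a_i^*/a_i^k)$ and bound the resulting log of a ratio of sums by the worst-case perturbation in $v$. The paper gets there by invoking its mediant inequality $\min_j(x_j/y_j)\le(\sum x_j)/(\sum y_j)\le\max_j(x_j/y_j)$ (which yields the slightly sharper two-sided bound by $\min_j(v_j^*-v_j^k)/\eta$ and $\max_j(v_j^*-v_j^k)/\eta$), whereas you directly sandwich each term by $e^{\pm M}$ and sum; both arguments are one line and prove the stated $|\cdot|\le M$ estimate.
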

The proof is given in the supplementary material.
\begin{lemma}
\label{lemma:sup_norm}
The sup norms of the optimal solution $\mnorm{u^*}$ and $\mnorm{v^*}$ are bounded by:
\begin{equation*}
\max \{ \mnorm{u^*},\mnorm{v^*} \} \le \tau R,
\end{equation*}
where $R$ is defined in equation~\eqref{quan:R}.
\end{lemma}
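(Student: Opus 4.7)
The plan is to turn the fixed point identity of Lemma~\ref{lemma:optimal_solution} into a self-referential scalar equation for each $u^*_i$, bound the ``log-sum-exp'' term that appears there in terms of $\|v^*\|_{\infty}$ and $\|C\|_{\infty}$, then close the loop by symmetry against the analogous bound for $v^*$.

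First, I would substitute the definition of $a^*$ into Lemma~\ref{lemma:optimal_solution}. Writing $a^*_i = e^{u^*_i/\eta}\, S_i$ with
\[
S_i \;:=\; \sum_{j=1}^{n} \exp\!\left(\frac{v^*_j - C_{ij}}{\eta}\right),
\]
the relation $u^*_i/\tau = \log(\aA_i) - \log(a^*_i)$ becomes $u^*_i/\tau + u^*_i/\eta = \log(\aA_i) - \log(S_i)$, which I solve explicitly as
\[
u^*_i \;=\; \frac{\eta\tau}{\eta+\tau}\bigl[\log(\aA_i) - \log(S_i)\bigr].
\]
The same manipulation yields the symmetric expression for $v^*_j$ with $\bB_j$ and the transposed log-sum-exp.

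Next, I would bound $\log(S_i)$. Using $C_{ij}\ge 0$ (Assumption (A2)), each summand is at most $\exp(\|v^*\|_\infty/\eta)$, giving the upper bound $\log(S_i)\le \log(n) + \|v^*\|_\infty/\eta$. For the lower bound, each summand is at least $\exp(-(\|v^*\|_\infty + \|C\|_\infty)/\eta)$, hence $\log(S_i)\ge \log(n) - \|v^*\|_\infty/\eta - \|C\|_\infty/\eta$. Combining the two,
\[
|\log(S_i)| \;\le\; \frac{\|v^*\|_\infty}{\eta} + \max\!\left\{\log(n),\; \frac{\|C\|_\infty}{\eta} - \log(n)\right\}.
\]

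Finally, plug this back into the expression for $u^*_i$, take the maximum over $i$, and do the analogous step for $v^*_j$. Letting $U_* := \max\{\|u^*\|_\infty,\|v^*\|_\infty\}$, both bounds collapse to
\[
U_* \;\le\; \frac{\eta\tau}{\eta+\tau}\cdot R \;+\; \frac{\tau}{\eta+\tau}\, U_*,
\]
with $R$ as in~\eqref{quan:R}. Rearranging gives $U_*\cdot \eta/(\eta+\tau) \le \eta\tau R/(\eta+\tau)$, i.e.\ $U_* \le \tau R$, which is the claim.

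The only mildly delicate point is the mutual coupling between the bounds on $\|u^*\|_\infty$ and $\|v^*\|_\infty$: the $u^*_i$-equation contains $\|v^*\|_\infty$ through $\log(S_i)$, and vice versa, so no one-sided estimate closes. Defining $U_*$ as the common maximum and exploiting that the self-coupling coefficient is $\tau/(\eta+\tau)<1$ is what lets the recursion resolve cleanly; the rest of the argument is just careful bookkeeping with log-sum-exp bounds.
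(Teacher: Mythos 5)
Your proposal is correct and follows essentially the same approach as the paper: both start from the fixed-point identity of Lemma~\ref{lemma:optimal_solution}, bound the log-sum-exp term $\log(S_i)$ above and below using $C_{ij}\ge 0$ and $\|v^*\|_\infty$, and then close the mutual coupling between $\|u^*\|_\infty$ and $\|v^*\|_\infty$. The only cosmetic difference is that the paper closes the loop by a ``WLOG $\|u^*\|_\infty \ge \|v^*\|_\infty$'' substitution while you phrase it as a contraction in the common maximum $U_*$ with coefficient $\tau/(\eta+\tau)<1$; the two resolutions are equivalent.
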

\begin{proof}
We start with the equations for the solution $u^*$ in Lemma \ref{lemma:optimal_solution}, i.e., we have
\begin{align*}
&\frac{u_i^*}{\tau} = \log(\aA_i) - \log \left(\sum_{j=1}^n \exp\left(\frac{u_i^* + v_j^* - C_{ij}}{\eta} \right)  \right),
\end{align*}
which can be rewritten as
\begin{align*}
&u_i^* \left(\frac{1}{\tau} + \frac{1}{\eta} \right) = \log (\aA_i) - \log \left[\sum_{j=1}^n \exp\left(\frac{v_j^*  - C_{ij}}{\eta} \right) \right].
\end{align*}

The second term in the above display can be lower bounded as follows
{\footnotesize
\begin{align*}
\log\left[\sum_{j=1}^n \exp\left(\frac{v_j^* - C_{ij}}{\eta} \right)\right]
     &\ge \log(n) + \min_{1 \le j \le n} \left\{ \frac{v_j^* - C_{ij}}{\eta} \right\} \\
     &\ge \log(n) - \frac{\mnorm{v^*}}{\eta} - \frac{\mnorm{C}}{\eta}. 
\end{align*}
\par}
Additionally, we also have
{\footnotesize
\begin{align*}
\log\left[\sum_{j=1}^n \exp\left(\frac{v_j^* - C_{ij}}{\eta} \right)\right] 
&\le \log(n) + \max_{1 \le j \le n} \left\{ \frac{v_j^* - C_{ij}}{\eta} \right\} \\
&\le \log(n) + \frac{\mnorm{v^*}}{\eta}.
\end{align*}
\par}
Collecting the above results, we find that
\begin{align*}
     &\left|\log\left[\sum_{j=1}^n \exp\left(\frac{v_j^* - C_{ij}}{\eta} \right)\right] \right| \\
     &\le \frac{\mnorm{v^*}}{\eta} + \max \left\{ \log(n), \frac{\mnorm{C^*}}{\eta} - \log(n) \right\}.
\end{align*}


Hence, we obtain that
\begin{align*}
   |u_i^*| \left(\frac{1}{\tau} + \frac{1}{\eta} \right) &\le |\log (\aA_i)| + \frac{\mnorm{v^*}}{\eta} + \\
   &\qquad \max \left\{ \log(n), \frac{\mnorm{C^*}}{\eta} - \log(n) \right\}.
\end{align*}
By choosing an index $i$ such that $|u_i^*| = \mnorm{u^*}$ and combining with the fact that $|\log (\aA_i)| \le \max \{ \mnorm{\log(\aA)}, \mnorm{\log(\bB)} \}$, we have
\begin{align*}
   \mnorm{u^*} \left(\frac{1}{\tau} + \frac{1}{\eta} \right) \le \frac{\mnorm{v^*}}{\eta} + R.
\end{align*}
WLOG assume that $\mnorm{u^*} \ge \mnorm{v^*}$. Then, we obtain the stated bound in the conclusion.
\end{proof}

\paragraph{Proof of Theorem~\ref{theorem:bound_optimal_solution}}
We first  consider the case when $k$ is even. From the update of $u^{k + 1}$  in Algorithm~\ref{Algorithm:Sinkhorn}, we have:
{\footnotesize
\begin{align*}
&u^{k+1}_i = \Big(\frac{u_i^k}{\eta} + \log \aA_i - \log a^{k}_i \Big) \frac{\eta \tau}{\tau + \eta} \\
&\space \space = \left\{ \frac{u_i^k}{\eta} + \left[ \log (\aA_i) - \log (a^*_i) \right] + \left[ \log (a^*_i) - \log (a^{k}_i) \right] \right\} \frac{\eta \tau}{\tau + \eta}.
\end{align*}
\par}
Using Lemma \ref{lemma:optimal_solution}, the above equality is equivalent to
\begin{align*}
    u_i^{k+1} - u_i^* = \left[\eta \log\Big(\frac{a_i^*}{a_i^k}\Big) - (u_i^* - u_i^k) \right] \frac{\tau}{\tau + \eta}.
\end{align*}
Using Lemma \ref{lemma:kth_inequality}, we get
\begin{align*}
    \big|u_i^{k+1} - u_i^*\big| \leq \max_{1\leq j \leq n} \big|v_j^k - v_j^* \big| \frac{\tau}{\tau + \eta}.
\end{align*}

This leads to $\mnorm{u^{k+1} - u^*} \leq  \frac{\tau}{\tau + \eta} \mnorm{v^k - v^*}$. Similarly, we obtain $\mnorm{v^{k} - v^*} \leq  \frac{\tau}{\tau + \eta} \mnorm{u^{k - 1} - u^*}$. Combining the two inequalities yields
\begin{align*}
    \mnorm{u^{k+1} - u^*} \leq \ \Big( \frac{\tau}{\tau + \eta} \Big)^2 \mnorm{u^{k-1} - u^*}.
\end{align*}

Repeating all the above arguments alternatively, we have
\begin{align*}
    \mnorm{u^{k+1} - u^*} & \leq \Big( \frac{\tau}{\tau + \eta} \Big)^{k+1} \mnorm{v^0 - v^*} \\
    & = \Big( \frac{\tau}{\tau + \eta} \Big)^{k+1} \mnorm{v^*}.
\end{align*}
Note that $v^{k + 1} = v^k$ for $k$ even. Therefore, we find that
\begin{align*}
    \mnorm{v^{k+1} - v^*} & \le \frac{\tau}{\tau + \eta} \mnorm{u^{k - 1} - u^*} \\
    & \leq \Big( \frac{\tau}{\tau + \eta} \Big)^{k} \mnorm{v^*}.
\end{align*}
These two results lead to $\Delta_{k + 1} \leq \Big( \frac{\tau}{\tau + \eta} \Big)^{k} \|v^* \|_\infty$. Similarly, for $k$ odd we obtain $\Delta_{k + 1} \leq \Big( \frac{\tau}{\tau + \eta} \Big)^{k} \|v^* \|_\infty$. Thus the above inequality is true for all $k$. Using the fact that $\mnorm{v^*} \le \max\{\mnorm{u^*},\mnorm{v^*}\}$ and Lemma \ref{lemma:sup_norm}, we obtain the conclusion of Theorem~\ref{theorem:bound_optimal_solution}.
\subsection{Proof of Theorem~\ref{theorem:Sinkhorn_to_optimal}}
The proof is based on the upper bound for the convergence rate in Theorem \ref{theorem:bound_optimal_solution} and an upper bound for the solutions  $\widehat{x}$ and $x^*$  of functions in equations~\eqref{eq:unbalanced_OT} and \eqref{eq:entro_unbalanced_OT}, respectively, which are direct consequences of the following lemmas.

\begin{lemma}
\label{lemma:key_equation}
Assume that the function $g$ attains its minimum at $X^*$, then
\begin{equation}
g(X^*) + (2\tau + \eta) x^* = \tau (\alpha + \beta).
\label{eq:g_solution}
\end{equation}
Similarly, assume that $f$ attains its minimum at $\widehat{X}$, then 
\begin{equation}
f(\widehat{X}) + 2\tau\widehat{x} = \tau (\alpha + \beta),
\label{eq:f_solution}
\end{equation}
where $\sum_{i,j=1}^n X^* = x^*$ and $\sum_{i,j=1}^n \widehat{X} = \widehat{x}$.
\end{lemma}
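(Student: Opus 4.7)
The plan is to derive both identities from first-order optimality conditions, then use the algebraic identity $\sum_i x_i \log(x_i/y_i) = \mathrm{KL}(x\|y) + \sum_i x_i - \sum_i y_i$ to re-assemble the objective.

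For \eqref{eq:g_solution}, I would start by computing the partial derivative of $g$ with respect to $X_{ij}$:
\begin{align*}
\partial_{X_{ij}} g(X) = C_{ij} + \eta \log X_{ij} + \tau \log\!\Bigl(\tfrac{(X\one_n)_i}{\aA_i}\Bigr) + \tau \log\!\Bigl(\tfrac{(X^\top\one_n)_j}{\bB_j}\Bigr).
\end{align*}
Since the entropy $H$ forces the minimizer $X^*$ to have strictly positive entries, setting $\partial_{X_{ij}} g(X^*) = 0$ yields the stationarity relation $C_{ij} + \eta \log X^*_{ij} + \tau \log(a^*_i/\aA_i) + \tau \log(b^*_j/\bB_j) = 0$. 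Next I would multiply this relation by $X^*_{ij}$ and sum over $i,j$; the first term becomes $\langle C, X^*\rangle$, the second becomes $\eta\sum_{ij} X^*_{ij}\log X^*_{ij} = -\eta H(X^*) + \eta x^*$, and the third, using $\sum_j X^*_{ij} = a^*_i$, becomes $\tau\sum_i a^*_i\log(a^*_i/\aA_i) = \tau\,\mathrm{KL}(a^*\|\aA) + \tau x^* - \tau\alpha$, and similarly the fourth becomes $\tau\,\mathrm{KL}(b^*\|\bB) + \tau x^* - \tau\beta$. Collecting everything reassembles $g(X^*)$ and produces $g(X^*) + (2\tau + \eta)x^* = \tau(\alpha+\beta)$.

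For \eqref{eq:f_solution}, I would repeat the argument without the entropic term. Since $f$ has no strict-convexifying entropy, the minimizer $\widehat X$ may lie on the boundary $X_{ij}=0$, so the stationarity condition becomes a KKT condition: $C_{ij} + \tau\log(\widehat a_i/\aA_i) + \tau\log(\widehat b_j/\bB_j) \ge 0$, with equality whenever $\widehat X_{ij} > 0$. Multiplying by $\widehat X_{ij}$ kills the slack entries, so summing still gives $\langle C,\widehat X\rangle + \tau\sum_i \widehat a_i\log(\widehat a_i/\aA_i) + \tau\sum_j \widehat b_j\log(\widehat b_j/\bB_j) = 0$, and the same rearrangement as above produces $f(\widehat X) + 2\tau \widehat x = \tau(\alpha+\beta)$.

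The main obstacle, as suggested above, is the handling of \eqref{eq:f_solution} on the boundary: one must justify that the subdifferential/KKT condition is valid for the unregularized problem and that multiplying by $\widehat X_{ij}$ exactly cancels any inactive inequality constraints. This is standard (complementary slackness for the non-negativity constraints $X\ge 0$ in a convex program), but it is the only place where the two cases differ conceptually. Everything else is a bookkeeping computation built on the identity $\sum_i x_i\log(x_i/y_i) = \mathrm{KL}(x\|y) + \sum_i(x_i - y_i)$ together with the definitions of $x^*, \widehat x, \alpha, \beta$.
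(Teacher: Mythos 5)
Your proof is correct, and it reaches the identities by a route that is closely related to but genuinely distinct from the one in the paper. Rather than computing the coordinatewise gradient of $g$ and pairing it with $X^*$, the paper restricts $g$ to the scaling ray $t \mapsto g(tX^*)$ for $t>0$, expands this one-variable function explicitly using $\KL(ta^*\|\aA) = t\,\KL(a^*\|\aA) + (1-t)\alpha + x^*t\log t$ and the analogous identity for $-H$, and obtains $g(tX^*) = t\,g(X^*) + \tau(1-t)(\alpha+\beta) + (2\tau+\eta)x^*t\log t$; differentiating at the interior minimizer $t=1$ gives the claim at once, and the same computation with $\eta=0$ gives \eqref{eq:f_solution}. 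Of course $\tfrac{d}{dt}g(tX^*)\big|_{t=1} = \langle \nabla g(X^*), X^*\rangle$, so the two proofs evaluate the same scalar in the end; but the paper's one-dimensional argument buys exactly the thing you flagged as the obstacle. For the unregularized problem the scaling proof needs only that $f(t\widehat X)$ is differentiable as a function of the single variable $t$ on $(0,\infty)$ --- which is automatic from the algebraic expansion --- whereas your version needs the full gradient of $f$ to exist at $\widehat X$ plus a KKT/complementary-slackness argument on the boundary $X\ge 0$. Your route does go through (for the regularized problem it is airtight, since the entropy forces $X^*_{ij}>0$ entrywise; for the unregularized one you would additionally observe that at an optimum $\widehat a_i>0$ and $\widehat b_j>0$, since otherwise the log terms in your stated KKT inequality are $-\infty$ and the stationarity condition is ill-posed, before applying complementary slackness to discard the $\widehat X_{ij}=0$ entries), but the paper's scaling argument dispenses with all boundary case analysis, which is precisely why the authors remark that it ``can be applied to the unregularized case as well,'' in contrast to the fixed-point-equation route of Janati et al.
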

Both equations in Lemma  \ref{lemma:key_equation} establish the relationships between the optimal solutions of functions in equations~\eqref{eq:unbalanced_OT} and \eqref{eq:entro_unbalanced_OT} with other parameters. Those relationships are very useful for analysing the behaviour of the optimal solution of UOT, because  the UOT does not have any conditions on the marginals as the OT does. Consequences of Lemma \ref{lemma:key_equation} include Corollary \ref{corollary:solution_sum_bound} which provides the upper bounds for $\widehat{x}$ and $x^*$ as well as bounds for the entropic functions in the proof of Theorem \ref{theorem:Sinkhorn_to_optimal}. The key idea of the proof surprisingly comes from the fact that the UOT solution does not have to meet the marginal constraints. We note that equation \eqref{eq:g_solution} could also be proved by using the fixed point equations in Lemma \ref{lemma:optimal_solution} as in \cite{janati2019spatio}. Here we offer an alternative proof that can be applied to the unregularized case as well. We now present the proof of Lemma~\ref{lemma:key_equation}.
\begin{proof}
Consider the function $g(tX^*)$, where $t\in \mathbb{R}^+$,
\begin{align*}
g(tX^*) &= \langle C, tX^*\rangle + \tau \text{KL}(tX^* \one_n \| \aA)\\
& \quad + \tau \text{KL}((tX^*)^{\top}\one_n \| \bB) - \eta H(tX^*).
\end{align*}
For the KL term of $g(tX^*)$, we have:
\begin{align*}
&\text{KL}(tX^*\mathbf{1}_n\|\aA) \\
&= \sum_{i=1}^n (t a_i^*) \log\left(\frac{t a_i^*}{\aA_i} \right) - \sum_{i=1}^n (t a_i^*) + \sum_{i=1}^n \aA_i \\
&= \sum_{i=1}^n (t a_i^*) \left(\log\left(\frac{a_i^*}{\aA_i} \right) + \log(t) \right) - t x^* + \alpha \\
& = t \sum_{i=1}^n \left( a_i^*  \log\left(\frac{a_i^*}{\aA_i} \right) - x^* + \aA_i \right) + (1-t)\alpha \\
& \hspace{17 em} + x^*t\log(t) \\
& = t \text{KL}(X^* \mathbf{1}_n \|\aA) + (1-t)\alpha + x^* t\log(t).
\end{align*}
Similarly, we get
\begin{align*}
 \text{KL}(t(X^*)^{\top} \mathbf{1}_n \|\bB) &= t \text{KL}\big((X^*)^{\top}\mathbf{1}_n \|\bB\big) \\
& \quad + (1-t)\beta + x^* t\log(t).
\end{align*}
For the entropic penalty term, we find that
\begin{align*}
    - H(tX^*) &= \sum_{i,j=1}^n tX^*_{ij} \big(\log(tX_{ij}^*) -1\big) \\
    &= \sum_{i,j} t X_{ij}^* \big(\log(X_{ij}^*)-1\big) + x^* t \log(t) \\
    &= -t H(X^*) + x^* t\log(t).
\end{align*}
Putting all results together, we obtain
\begin{align*}
g(tX^*) = t g(X^*) + \tau(1-t)(\alpha+\beta) & \\
& \hspace{-2 em} +  (2\tau  + \eta)x^* t\log(t).
\end{align*}
Taking the derivative of $g(tX^*)$ with respect to $t$,
\begin{equation*}
    \resizebox{\hsize}{!}{$\frac{\partial g(tX^*)}{\partial t} = g(X^*) - \tau(\alpha+\beta) + (2\tau + \eta)x^*(1+\log(t))$.}
\end{equation*}
The function $g(tX^*)$ is well-defined for all $t\in \mathbb{R}^+$.
We know that $g(tX^*)$ attains its minimum at $t=1$. Replace $t=1$ into the above equation, we obtain
\begin{align*}
     g(X^*) - \tau(\alpha+\beta) +  (2\tau + \eta)x^* &= 0 \\
    g(X^*) + (2\tau +\eta) x^* &= \tau(\alpha+\beta).
\end{align*}
The second claim is proved in the same way.
\end{proof}

\begin{corollary}
\label{corollary:solution_sum_bound}
Assume that conditions (A1-A2) hold and $\eta \log(n)$ is sufficiently small. We have the following bounds on $x^*$ and $\widehat{x}$:
\begin{align*}
&(a)\quad x^* \leq \left(\frac{1}{2}+ \frac{\eta \log(n)}{2\tau - 2 \eta \log(n)}\right)(\alpha+\beta) + \frac{1}{6\log(n)},\\
&(b)\quad  \widehat{x} \leq \frac{\alpha + \beta}{2}.
\end{align*}
\end{corollary}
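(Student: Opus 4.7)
The plan hinges on the two identities from Lemma~\ref{lemma:key_equation}: $g(X^*)+(2\tau+\eta)x^* = \tau(\alpha+\beta)$ and $f(\widehat{X})+2\tau\widehat{x} = \tau(\alpha+\beta)$. In each case, an upper bound on the total mass is equivalent to a lower bound on the corresponding optimal objective.

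Part~(b) is immediate: under (A2) the cost term is nonnegative, and each $\KL$ term is always nonnegative by definition, so $f(\widehat{X})\ge 0$; solving the second identity for $\widehat{x}$ then gives $\widehat{x}\le (\alpha+\beta)/2$.

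Part~(a) needs one extra ingredient, because $g(X^*)=f(X^*)-\eta H(X^*)$ can be negative due to the entropy. First I would drop the nonnegative $f(X^*)$ to get $g(X^*)\ge -\eta H(X^*)$, and then upper bound $H(X^*)$ via Jensen's inequality applied to the convex map $t\mapsto t\log t$ on the $n^2$ entries of $X^*$: writing $p_{ij}=X^*_{ij}/x^*$, the standard max-entropy estimate $\sum p_{ij}\log p_{ij}\ge -2\log n$ gives
\begin{equation*}
H(X^*)\;=\;x^* - \sum_{i,j} X^*_{ij}\log X^*_{ij}\;\le\;x^*\bigl(1+2\log n-\log x^*\bigr).
\end{equation*}
Plugging this into the first identity and simplifying yields the master inequality
\begin{equation*}
2(\tau-\eta\log n)\,x^* \;+\; \eta\, x^*\log x^*\;\le\;\tau(\alpha+\beta).
\end{equation*}

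A case split on whether $x^*\ge 1$ finishes the proof. If $x^*\ge 1$, the term $\eta x^*\log x^*$ is nonnegative, so $x^*\le \tau(\alpha+\beta)/[2(\tau-\eta\log n)]$, which algebraically equals $\bigl(\tfrac12+\tfrac{\eta\log n}{2(\tau-\eta\log n)}\bigr)(\alpha+\beta)$, matching the first summand claimed. If $x^*<1$, I use $-x\log x\le 1/e$ on $(0,1]$ to bound $\eta x^*\log x^*\ge -\eta/e$, which produces an additive correction of $\eta/[2e(\tau-\eta\log n)]$ on top of the previous bound; under the ``$\eta\log n$ sufficiently small'' hypothesis this correction is at most $1/(6\log n)$.

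The main obstacle is really a bookkeeping one: pinning down precisely what ``$\eta\log n$ sufficiently small'' must mean so that the denominator $\tau-\eta\log n$ stays comparable to $\tau$ and the residual term in the second case is truly $\le 1/(6\log n)$. Once that threshold is fixed, the argument is a one-line Jensen estimate combined with the algebraic identity of Lemma~\ref{lemma:key_equation}.
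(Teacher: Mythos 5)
Your proposal is correct and follows essentially the same route as the paper: both use the identities of Lemma~\ref{lemma:key_equation}, the lower bound $g(X^*)\ge -\eta H(X^*)$ together with the max-entropy estimate $H(X^*)\le x^*(1+2\log n-\log x^*)$, and then absorb the $\eta x^*\log x^*$ term. The one cosmetic difference is that the paper disposes of that term in a single step via $z\log z\ge z-1$ and then rearranges, whereas you split on $x^*\gtrless 1$ and invoke $-x\log x\le 1/e$ in the small-$x^*$ case; both yield the same additive $O\bigl(\tfrac{1}{\log n}\bigr)$ correction once $\eta\log n$ is pinned down by the condition $\eta/\tau\le \tfrac{1}{4\log n\max\{1,\alpha+\beta\}}$ stated in~\eqref{ineq:bound_eta}.
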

We defer the proof of Corollary \ref{corollary:solution_sum_bound} to the appendix. Next, we use the condition for  $k$ in Theorem \ref{theorem:Sinkhorn_to_optimal} to bound some relevant quantities at the $k$-th iteration of  the Sinkhorn algorithm. 
\begin{lemma}
\label{lemma:bound_xk}
Assume that the regularity conditions (A1-A2) hold and $k$ satisfies the inequality in Theorem \ref{theorem:Sinkhorn_to_optimal}. The following are true
\begin{align*}
&(a) \quad \Lambda_{k - 1} \leq \frac{\eta^2}{8(\tau+1)}, \\
&(b) \quad  |x^k - x^*| \le \frac{3}{\eta} \Delta_k \min\big\{x^*,x^k \big\}, \\
&(c) \quad x^k \leq S,
\end{align*}
where $S$ is defined in equation~\eqref{quan:S}.
\end{lemma}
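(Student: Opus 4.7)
\textbf{Proof plan for Lemma~\ref{lemma:bound_xk}.}

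My plan is to treat the three claims in order, since each one is used in the next. For part (a), the plan is to take logarithms and match the hypothesis on $k$ exactly. Unfolding the definition, $\Lambda_{k-1} = \tau R\bigl(\tfrac{\tau}{\tau+\eta}\bigr)^{k-1}$, so the desired inequality $\Lambda_{k-1} \le \eta^2/(8(\tau+1))$ is equivalent to
\begin{equation*}
(k-1)\log\!\left(1 + \tfrac{\eta}{\tau}\right) \ge \log\bigl(8\tau R (\tau+1)\bigr) + 2\log\bigl(1/\eta\bigr).
\end{equation*}
I would use the elementary bound $\log(1+x) \ge x/(1+x)$ with $x = \eta/\tau$, which gives $\log(1+\eta/\tau) \ge \eta/(\tau+\eta)$, and then note that $\tau U/\varepsilon + 1 = (\tau+\eta)/\eta$ since $\eta = \varepsilon/U$. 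Rewriting $\log(8\eta R) + \log(\tau(\tau+1)) + 3\log(U/\varepsilon)$ as $\log(8R\tau(\tau+1)) + 2\log(1/\eta)$ shows the hypothesis on $k$ is exactly the right strength.

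For part (b), the key observation is that $X^k_{ij}/X^*_{ij} = \exp\!\bigl((u^k_i - u^*_i)/\eta + (v^k_j - v^*_j)/\eta\bigr)$ because the $C_{ij}$ contributions cancel. Writing $\theta_{ij}$ for this exponent, Theorem~\ref{theorem:bound_optimal_solution} gives $|\theta_{ij}| \le 2\Delta_k/\eta$, and combining with part (a) shows $|\theta_{ij}| \le \eta/(4(\tau+1))$, which is small (certainly at most, say, $\log(3/2)$ since $\eta$ is of order $\varepsilon/U \le 1/2$). For such small $\theta_{ij}$, $|e^{\theta_{ij}} - 1| \le e^{|\theta_{ij}|}|\theta_{ij}| \le \tfrac{3}{2}|\theta_{ij}|$. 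Summing $|X^k_{ij} - X^*_{ij}| = X^*_{ij}|e^{\theta_{ij}} - 1|$ over $i,j$ gives $|x^k - x^*| \le (3/\eta)\Delta_k \, x^*$. Swapping the roles of $X^k$ and $X^*$ (i.e., writing $X^*_{ij} = X^k_{ij} e^{-\theta_{ij}}$) yields the same bound with $x^k$ in place of $x^*$, from which the $\min$ follows.

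For part (c), I would combine part (b) with the bound on $x^*$ from Corollary~\ref{corollary:solution_sum_bound}(a): $x^k \le x^*(1 + 3\Delta_k/\eta)$, with $3\Delta_k/\eta \le 3\eta/(8(\tau+1))$ by part (a). Substituting the corollary's bound for $x^*$ gives
\begin{equation*}
x^k \le \Bigl(\tfrac{1}{2} + \tfrac{\eta\log(n)}{2\tau - 2\eta\log(n)}\Bigr)(\alpha+\beta)\Bigl(1 + \tfrac{3\eta}{8(\tau+1)}\Bigr) + \tfrac{1}{6\log(n)}\Bigl(1 + \tfrac{3\eta}{8(\tau+1)}\Bigr),
\end{equation*}
and the task reduces to showing this is at most $S = \tfrac{1}{2}(\alpha+\beta) + \tfrac{1}{2} + \tfrac{1}{4\log(n)}$. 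The main obstacle I anticipate is exactly this last arithmetic step: each of the two correction factors is small (they vanish as $\eta \to 0$), but I must argue that under the choice $\eta = \varepsilon/U$ with $U \ge 4\varepsilon(\alpha+\beta)\log(n)/\tau$ and $U \ge 4\varepsilon\log(n)/\tau$ (which are built into the definition of $U$ in \eqref{quan:U}), the $(\alpha+\beta)$-scaled correction is absorbed by $\tfrac{1}{2}$ and the $1/\log(n)$ correction is absorbed by $\tfrac{1}{4\log(n)}$. This is a routine but bookkeeping-heavy verification; the reason the specific terms in $U$ appear in~\eqref{quan:U} is precisely to make this accounting work.
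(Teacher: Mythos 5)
Your proposal is correct, and the balance between reuse and novelty is worth noting. For part (a), taking logarithms and applying $\log(1+x)\ge x/(1+x)$ with $x=\eta/\tau$ is exactly the same elementary fact the paper packages as Lemma~\ref{lemma:basic_inequalities}(d), namely $(1+1/x)^{x+1}\ge e$; the paper runs the computation in exponential form via an auxiliary $t=1+\log(D)/(3\log(1/s))$, while you run it in log form and verify that the hypothesis on $k$ matches term by term after rewriting $\log(8\eta R)+3\log(1/\eta)=\log(8R)+2\log(1/\eta)$ and $\tau U/\varepsilon+1=(\tau+\eta)/\eta$ — cleaner to read but the same inequality. For part (b), your route is genuinely different and slightly more direct: you work with $|e^{\theta_{ij}}-1|\le|\theta_{ij}|e^{|\theta_{ij}|}$ entrywise and sum, giving $|x^k-x^*|\le(3/\eta)\Delta_k\,x^*$ in one pass (and the $\min$ by symmetry), whereas the paper first invokes the mediant inequality (Lemma~\ref{lemma:basic_inequalities}(a)) to bound the ratio $\max\{x^k/x^*,x^*/x^k\}$, then expands $e^z\le 1+|z|+|z|^2$ and collects a quartic, and finally passes from a ratio bound to a difference bound via Lemma~\ref{lemma:basic_inequalities}(c). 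Your version skips two of those lemmas and avoids the polynomial bookkeeping; what the paper's version buys is that the same basic lemma parts get reused elsewhere. For part (c), you identify precisely the right ingredients (part (b), Corollary~\ref{corollary:solution_sum_bound}(a), and the two constraints $\eta\le\tau/(4\log n)$ and $\eta(\alpha+\beta)\le\tau/(4\log n)$ built into $U$), but you stop at the point where the paper carries out the arithmetic that absorbs the $(\alpha+\beta)$-scaled correction into $\tfrac12$ and the $1/\log(n)$ correction into $\tfrac1{4\log n}$; that verification is routine as you say, but it should be written out to close the proof.
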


We are now ready to construct a proof for Theorem \ref{theorem:Sinkhorn_to_optimal}.
\paragraph{Proof of Theorem \ref{theorem:Sinkhorn_to_optimal}}
From the definitions of $f$ and $g$, we have
{\footnotesize
\begin{align}
&f(X^k) - f(\widehat{X}) \nonumber \\
&= g(X^k)+\eta H(X^k) - g(\widehat{X}) - \eta H(\widehat{X}) \nonumber \\
&= g(X^k)+\eta H(X^k) - g(\widehat{X}) - \eta H(\widehat{X}) - g(X^*) + g(X^*) \nonumber \nonumber \\
&\leq \Big[g(X^k)-g(X^*)\Big] + \eta \Big[H(X^k) - H(\widehat{X})\Big],
\label{bound:fx}
\end{align}
\par}
since $g(X^*) - g(\widehat{X})  \le 0$, as $X^*$ is the optimal solution of function~\eqref{eq:entro_unbalanced_OT}. The above two terms can be bounded separately as follows:
\paragraph{Upper Bound of $H(X^k) - H(\widehat{X})$.}  $ $\\
We first show the following inequalities
\begin{equation}
x - x\log(x) \le  H(X) \le  2x\log(n) + x - x\log (x)
\label{ineq:bound_Hx}
\end{equation}
for any $X$ that $X_{ij} \ge 0$ and $x = \sum_{ij} X_{ij}$.

Indeed, rewriting $-H(X)$ as
\begin{align*}
-H(X) = x\Big[\sum_{i,j=1}^n \frac{X_{ij}}{x} \log\Big(\frac{X_{ij}}{x} \Big) -1 \Big]+ x \log(x),
\end{align*}
and using $-2 \log(n) \leq  \sum_{i,j=1}^n \frac{X_{ij}}{x} \log\Big(\frac{X_{ij}}{x} \Big)\leq 0 $, we thus obtain equation~\eqref{ineq:bound_Hx}. 

Now apply the lower bound of \eqref{ineq:bound_Hx} to $-H(\widehat{X})$
\begin{align*}
    - H(\widehat{X})
    &\leq  \widehat{x} \log(\widehat{x}) - \widehat{x} \\
    &\leq \max\big\{0, \frac{\alpha +\beta}{2} \big[\log(\frac{\alpha+\beta}{2}) -1\big]  \big\}\\
    &\leq \frac{\alpha+\beta}{2} \log(\frac{\alpha+\beta}{2}) - \frac{\alpha+\beta}{2} + 1,
    \end{align*}
where the second inequality is due to $x\log x - x$ being convex and  $0 \le \widehat{x} \leq \frac{1}{2}(\alpha+\beta)$ by Corollary \ref{corollary:solution_sum_bound} and the third inequality is due to $\frac{\alpha +\beta}{2} \big[\log(\frac{\alpha+\beta}{2}) -1\big] + 1 \ge 0$.



Similarly, apply the upper bound of equation \eqref{ineq:bound_Hx} to $H(X^k)$
    \begin{align*}
     H(X^k) &\leq  2x^k \log(n) + x^k - x^k \log(x^k)\\
    &\leq 2x^k \log(n) + 1  \\
    &\leq  \Big(\alpha+\beta + 1 + \frac{1}{2\log(n)}\Big)\log(n) + 1,
    \end{align*}
where the last inequality is due to Lemma~\ref{lemma:bound_xk}(c).
By combining the two results, we have
\begin{align}
    H(X^k) - H(\widehat{X}) \le T,
    \label{bound:hx} 
\end{align}
where $T$ is defined in equation~\eqref{quan:T}.

\paragraph{Upper Bound of $g(X^k) - g(X^*)$.} $ $\\
WLOG we assume that $k$ is odd. At step $k-1$ of Algorithm \ref{Algorithm:Sinkhorn}, we find $u^k$ by minimizing the dual function \eqref{eq:dual_unbalanced_OT} given  $\aA$ and fixed $v^{k-1}$, and simply keep $v^k=v^{k-1}$. Hence, $X^{k}=B(u^{k},v^{k})$ is the optimal solution of
\begin{align*}
\min\limits_{X \in \Br_+^{n \times n}} g^{k}(X) &: = \left\langle C, X\right\rangle - \eta H(X) \\
&+ \tau \KL(X \one_{n} || \aA) + \tau \KL(X^{\top} \one_{n} || \bB^{k}),
\end{align*}
where $\bB^k = \exp\left(\frac{v^k}{\tau} \right) \odot \left[  (X^k)^T \one_n \right]$ with $\odot$ denoting element-wise multiplication. 

Denote $\sum_{i=1}^n \bB_i^k  =  \beta^k$. By Lemma \ref{lemma:key_equation},
\begin{align*}
g^{k}(X^{k}) &= \tau\big(\alpha + \beta^{k}\big) - (2\tau + \eta) x^{k}, \\
g(X^*)&= \tau\big(\alpha + \beta\big) - (2\tau + \eta) x^*.
\end{align*}
Writing $g(X^{k}) - g(X^*) = \left[ g(X^k) - g^k(X^k) \right] + \left[ g^k(X^k) - g(X^*) \right]$, following some derivations using the above equations of $g^{k}(X^{k})$ and $g(X^*)$ and the definitions of $g(X^k)$ and $g^{k}(X^{k})$, we get
{\footnotesize
\begin{align}
g(X^{k}) - g(X^*) &= \left[ -(2\tau +\eta)(x^k -x^*) \right] \nonumber \\
& \quad + \tau \left[ \sum_{j=1}^n b_j^k \log \left( \frac{\bB_j^k}{\bB_j} \right) \right].
\label{eq:diff_g_derivation}
\end{align}
\par}
By part (b) of Lemma \ref{lemma:bound_xk}, the first term is bounded by $\left( 2\tau +\eta \right) \frac{3}{\eta} \Delta_k x^k$.

Note that $\bB_j^k = \exp \left( \frac{v_j^k}{\tau} \right) b_j^k$ and
$\bB_j = \exp \left( \frac{v_j^*}{\tau} \right) b_j^* $. Use part (b) of Lemma \ref{lemma:kth_inequality}, we find that
\begin{align*}
    \log\left(\frac{\bB_j^k}{\bB_j} \right) &= \left[-  \log\left(\frac{b_j^*}{b_j^k}\right) \right] + \frac{1}{\tau} (v_j^k - v_j^*), \\
    \left|\log\left(\frac{\bB_j^k}{\bB_j} \right)\right| 
    &\leq \left( \frac{2}{\eta} \Delta_k \right) + \left ( \frac{1}{\tau} \Delta_k \right) = \left( \frac{2}{\eta} +  \frac{1}{\tau} \right) \Delta_k.
\end{align*}
Note that $b^k_j \ge 0$ for all $j$. The above inequality leads to
\begin{align*}
\left|\sum_{j=1}^n b_j^k \log \left( \frac{\bB_j^k}{\bB_j} \right) \right| &\leq \left( \sum_{j=1}^n b_j^k \right) \max_{1\leq j\leq n} \left|\log \left(\frac{\bB_j^k}{\bB_j}\right) \right| \\
&\leq x^k \left(\frac{2}{\eta} + \frac{1}{\tau}\right) \Delta_k.
\end{align*}
We have
\begin{align*}
    g(X^k) - g(X^*)
    &\leq \left[\left(2\tau + \eta \right) + 3(2\tau +\eta) \right] \left(\frac{\Delta_k}{\eta} \right)  x^k  \nonumber\\
    &\leq 8(\tau + 1)  \left( \frac{\Lambda_{k - 1}}{\eta} \right) S  ,
\end{align*}
where the first inequality is obtained by combining the bounds for two terms of \eqref{eq:diff_g_derivation} while the second inequality results from the fact that $\eta = \frac{\varepsilon}{U} \le \frac{\varepsilon}{2\varepsilon} = \frac{1}{2}$ with $U$ defined in \eqref{quan:U}, part $(c)$ of Lemma \ref{lemma:bound_xk}, and Theorem \ref{theorem:bound_optimal_solution}.

Using part $(a)$ of Lemma \ref{lemma:bound_xk}, this leads to
\begin{align}
    g(X^k) - g(X^*) \le \eta S.
    \label{bound:gx}
\end{align}
Combining \eqref{bound:fx}, \eqref{bound:hx}, \eqref{bound:gx} and the fact that $\eta = \frac{\varepsilon}{U} \le \frac{\varepsilon}{S + T}$, we get
\begin{align*}
    f(X^k) - f(\widehat{X})\leq \eta S + \eta T \leq  \varepsilon.
\end{align*}
As a consequence, we obtain the conclusion of Theorem~\ref{theorem:Sinkhorn_to_optimal}.

\section{Experiments} \label{sec:experiments}
In this section, we provide empirical evidence to illustrate our proven complexity on both synthetic data and real images. In both examples, we vary $\varepsilon$ such that it is small relative to the minimum value of the unregularized UOT function in equation~\eqref{eq:unbalanced_OT} which is computed in advance by using the \textbf{cvxpy} library~\cite{cvxpy_rewriting} with the splitting conic solver option. We then report the two $k$ values: 
\begin{itemize}
    \item The first $k$, denoted by $k_f$, follows the stopping rule in Algorithm \ref{Algorithm:Sinkhorn}.
    \item The second $k$, denoted by $k_c$, is  defined as the minimal $k_c$ such that for all later known  iterations $k'\geq k_c$ in the experiment, Algorithm \ref{Algorithm:Sinkhorn} returns an $\varepsilon$-approximation solution of the UOT problem. 
\end{itemize}  

\subsection{Synthetic Data}
For the simulated example, we choose $n = 100$ and $\tau = 5$. The elements of the cost matrix $C$ are drawn uniformly from the closed interval $[1, 50]$ while those of the marginal vectors $\aA$ and $\bB$ are drawn uniformly from $[0.1, 1]$ and then normalized to have masses $2$ and $4$, respectively. By varying $\varepsilon$ from $1.0$ to $10^{-4}$ (here we uniformly vary $\log(\frac{1}{\varepsilon})$ in the corresponding range for visualization purpose), we follow the scheme presented in the beginning of the section, and report values of $k_f$ and $k_c$ in Figure \ref{fig:k_bound}.

\begin{figure}[h]
    \centering
    \includegraphics[width=8cm]{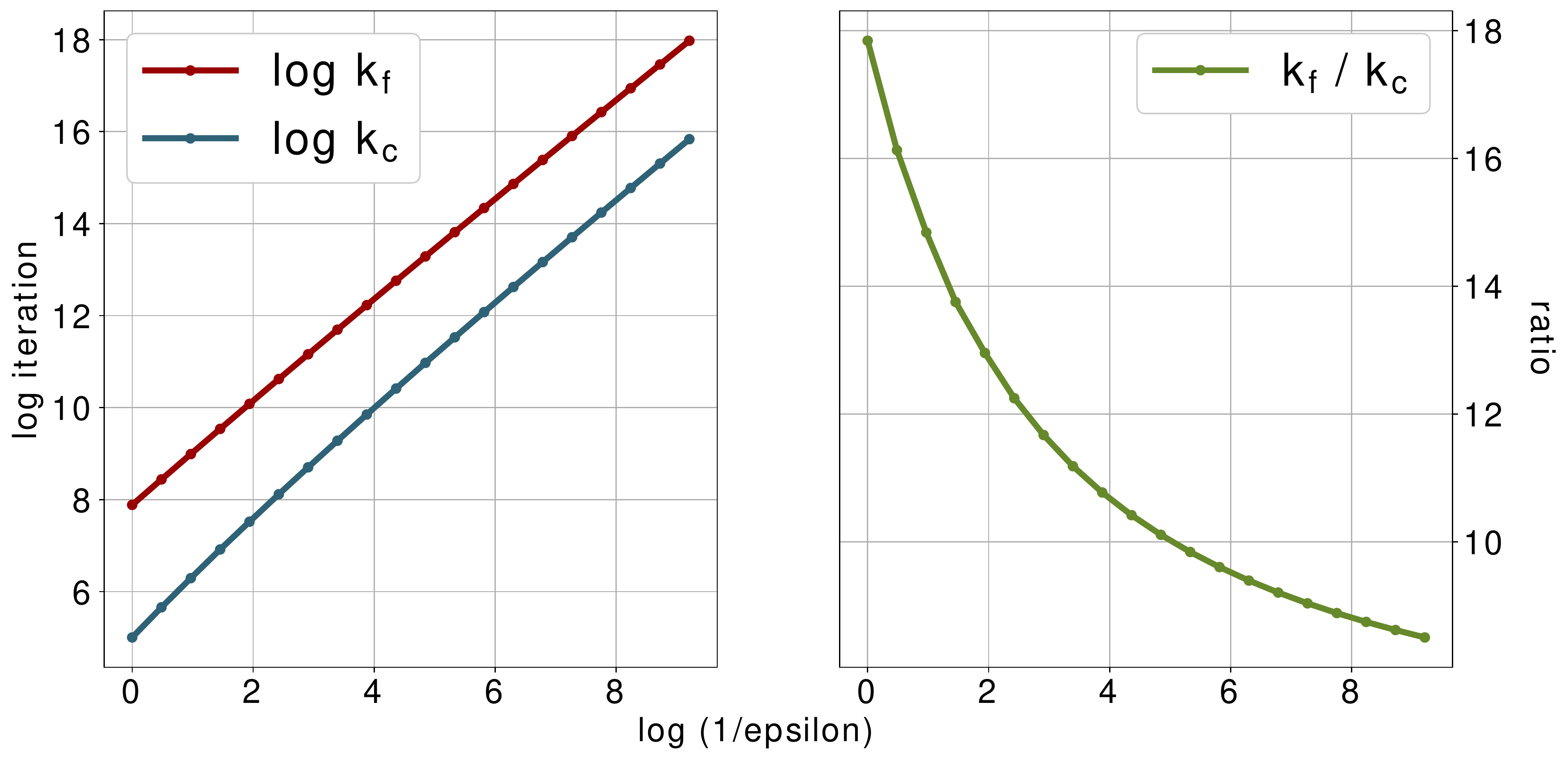}
    \caption{Comparison between $k_c$ and $k_f$ on the synthetic data when varying $\varepsilon$ from $1$ to $10^{-4}$ (and $\eta$ from $3.7(10)^{-2}$ to $3.7(10)^{-6}$ as computed from Theorem~\ref{theorem:Sinkhorn_to_optimal}). The optimal value here is $f(\widehat{X}) = 17.28$, yielding a relative error from $5.8(10)^{-2}$ to $5.8(10)^{-6}$. Both x-axes represent $\log(\frac{1}{\varepsilon}$); the y-axis of the left plot is the natural logarithm of the numbers of iteration (i.e. $k_f, k_c$) and the y-axis of the right plot is the ratio between them.}
    \label{fig:k_bound}
\end{figure}

Figure \ref{fig:k_bound} shows the log values of $k_f, k_c$ stated above when varying $\varepsilon$. When $\varepsilon$ becomes smaller, the left plot indicates that the gap between $k_f$ and $k_c$ becomes narrower, while the right plot shows that the ratio $\frac{k_f}{k_c}$ decreases ( $\frac{k_f}{k_c} \approx 18$ at $\varepsilon = 1$, going down to about $8$ at $\varepsilon = 10^{-4}$). We hypothesize from this trend that our bound becomes more and more accurate as $\varepsilon$ approaches $0$.




\subsection{MNIST Data}
For the MNIST dataset\footnote{http://yann.lecun.com/exdb/mnist/}, we follow similar settings in \cite{Dvurechensky-2018-Computational, altschuler2017near}. In particular, the marginals $\aA, \bB$ are two flattened images in a pair and the cost matrix $C$ is the matrix of $\ell_1$ distances between pixel locations. We also add a small constant $10^{-6}$ to each pixel with intensity 0, except we do not normalize the marginals. We average the results over 10 randomly chosen image pairs and plot the results in Figure~\ref{fig:mnist}. The results on MNIST dataset confirm our theoretical results on the bound of $k$ in Theorem~\ref{theorem:Sinkhorn_to_optimal}. It also shows that  the smaller $\varepsilon$ in the approximation, the closer the empirical result to the theoretical result.

\begin{figure}[!h]
  \centering
  \includegraphics[width=8cm]{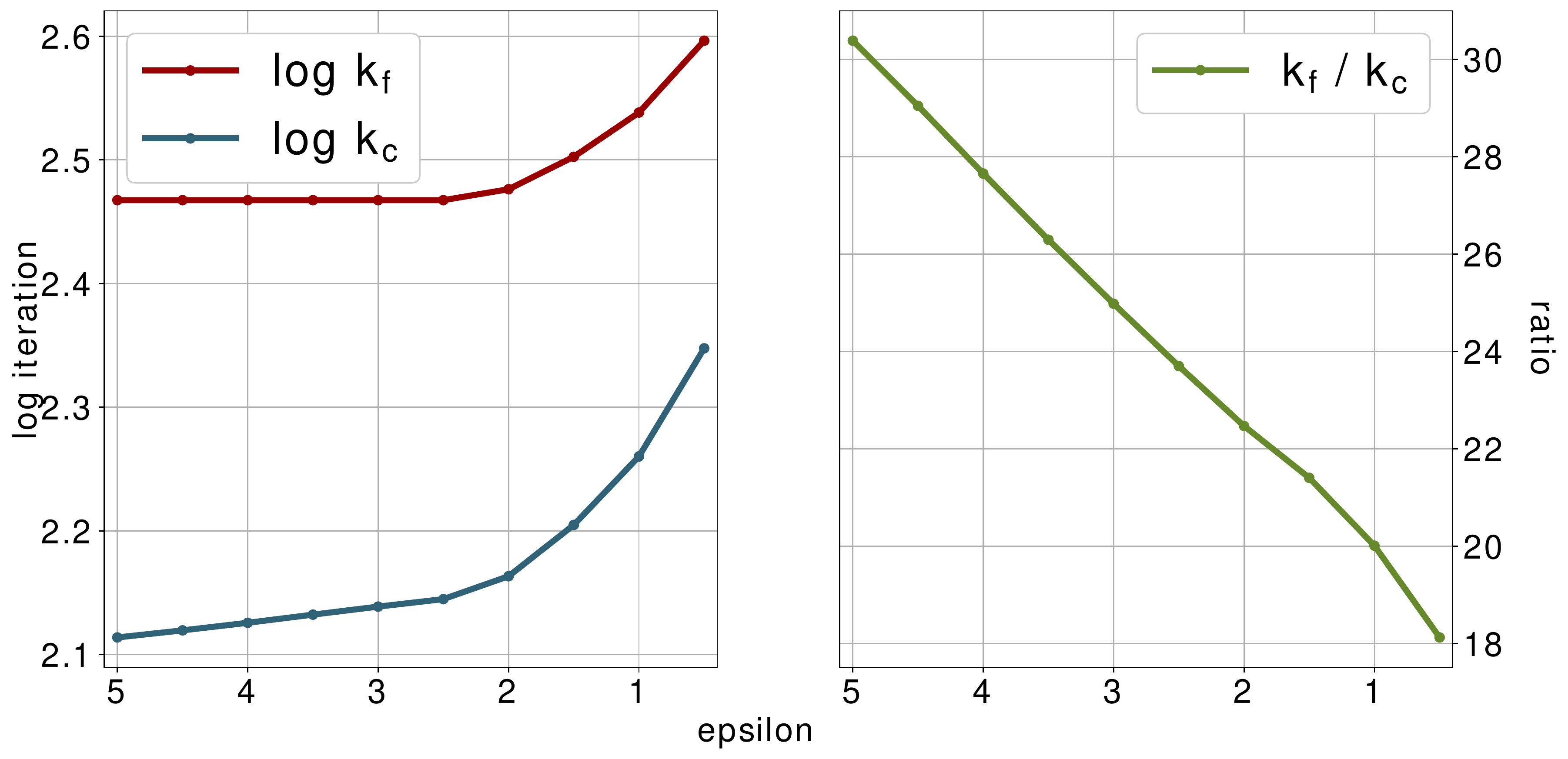}  
  \caption{Comparison between $\log(k_f)$ and $\log(k_c)$ on MNIST when varying $\varepsilon$ from $5$ to $0.5$. We used higher $\varepsilon$ to keep the relative error similar to the first experiment, due to a higher optimal value (among 10 chosen pairs, the minimum was $117.524$ and the maximum was $459.297$.)}
  \label{fig:mnist}
\end{figure}

\subsection{A Further Analysis for Synthetic Data}
In order to investigate  how challenging it is to improve  the theoretical bound for the number of required iterations, we carry out a deeper analysis on the synthetic example. In particular, we set $\eta = 0.5$, $\tau = 5$ and compute the ratios $\frac{\mnorm{v^k - v^*}}{\mnorm{u^{k+1} - u^*}}$ and $\frac{\mnorm{u^{k-1} - u^*}}{\mnorm{v^k - v^*}}$ for even $k$ in range $[0, 100]$ and plot them in Figure \ref{fig:geometric_rate}. As has been proved in Theorem~\ref{theorem:bound_optimal_solution}, these ratios are no less than $\frac{\tau + \eta}{\tau}$. The main reason for this choice is that these differences are used to construct  bounds for many key quantities in lemmas and theorems. These ratios, which are extremely close to $1.1$ for most of the iterations, are consistent with the ratio $\frac{\tau + \eta}{\tau}= 1.1$. Consequently, it is difficult to improve our inequalities in Theorem~\ref{theorem:bound_optimal_solution}.

\begin{figure}[]
    \centering
    \includegraphics[width=8cm]{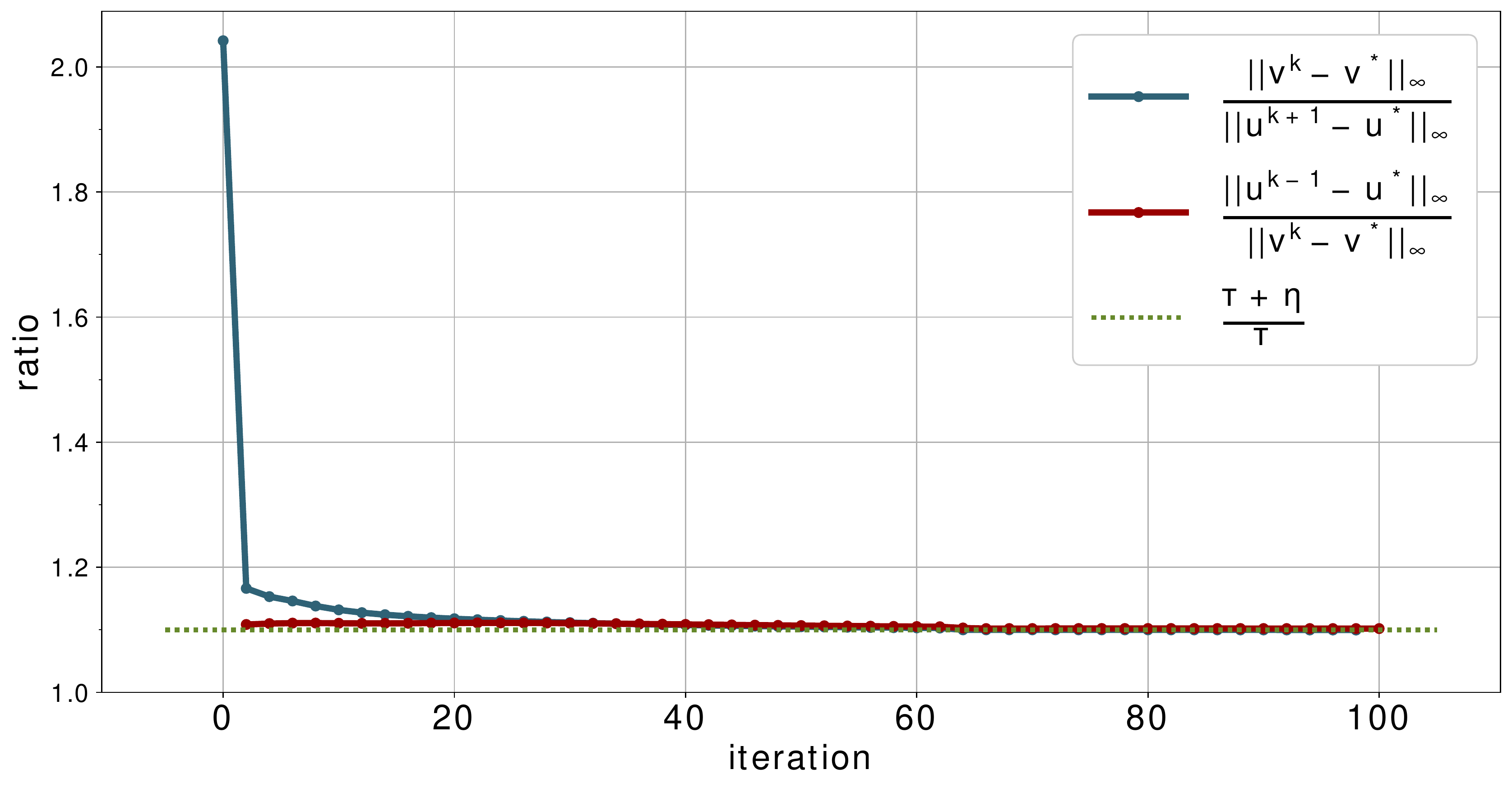}
    \caption{The ratios as observed empirically remain close to our geometric factor for most of the iterations.}
    \label{fig:geometric_rate}
\end{figure}

\section{Discussion}\label{sec:discussions}

In this paper, we prove the near-optimal upper bound of order $\bigOtil(n^{2}/ \varepsilon)$ of the complexity of the Sinkhorn algorithm for approximating the unbalanced optimal transport problem. That complexity is better than the complexity of the Sinkhorn algorithm for approximating the OT problem. In our analysis, some inequalities might not be tight, since we prefer to keep them in simple forms for easier presentation. These sub-optimalities perhaps lead to the inclusion of the logarithmic terms of $\varepsilon$ and $n$ in our complexity upper bound of the Sinkhorn algorithm. We now discuss a few future directions that can serve as natural follow-ups of our work. First, our analysis could be used in the multi-marginal case of UOT by applying Algorithm~\ref{Algorithm:Sinkhorn} repeatedly to every pair of marginals. Second, since the UOT barycenter problem has found several applications in recent years~\cite{Janati_Wasserstein_2019, Schiebinger_Optimal_2019}, it is desirable to establish the complexity analysis of algorithms for approximating it. Finally, similar to the OT problem, the Sinkhorn algorithm for solving UOT also suffers from the curse of dimensionality, namely, when the supports of the measures lie in high dimensional spaces. An important direction is to study efficient dimension reduction scheme with the UOT problem and optimal algorithms for solving it.
\clearpage
\newpage
\bibliography{refs}
\bibliographystyle{icml2020}

\clearpage
\newpage
\onecolumn
\icmltitle{Supplement to "On Unbalanced Optimal Transport: An Analysis of Sinkhorn Algorithm"}
In this appendix, we provide proofs for the remaining results in the paper.
\section{Proofs of Remaining Results}

Before proceeding with the proofs, we state the following simple inequalities:
\begin{lemma}
\label{lemma:basic_inequalities}
The following inequalities are true for all positive $x_i,y_i$, $x$, $y$ and  $0\leq z < \frac{1}{2}$:
\begin{align*}
&(a)\quad  \min_{1\leq i \leq n} \frac{x_i}{y_i}\leq \frac{\sum_{i=1}^n x_i}{\sum_{i=1}^n y_i} \leq \max_{1\leq i \leq n} \frac{x_i}{y_i}, \\
& (b) \quad \exp(z) \leq 1 + |z| + |z|^2, \\
& (c)\quad  \text{ If $\max \Big\{\frac{x}{y}, \frac{y}{x}\Big\} \leq 1 + \delta $, then $|x-y| \leq \delta \min\{x,y \}$},\\
& (d)\quad \left( 1+\frac{1}{x} \right)^{x + 1} \geq e.
\end{align*}
\end{lemma}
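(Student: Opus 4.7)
The four claims are elementary and can be handled independently; the plan is to dispatch each with a short, standalone argument so that the supplement can invoke them later as building blocks.

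For part (a), I would let $m = \min_i x_i/y_i$ and $M = \max_i x_i/y_i$. Since every $y_i > 0$, the definitions give $m\, y_i \le x_i \le M\, y_i$ for each $i$. Summing over $i \in [n]$ preserves the inequalities (all terms positive), and dividing by $\sum_i y_i > 0$ yields the two-sided bound. No hidden subtleties here.

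For part (b), the target is $e^z \le 1 + z + z^2$ on $0 \le z < 1/2$ (so $|z|=z$). The cleanest route is to set $\varphi(z) = 1 + z + z^2 - e^z$, note $\varphi(0) = 0$ and $\varphi'(0) = 0$, and observe that $\varphi''(z) = 2 - e^z > 0$ on $[0,\ln 2)$, which contains $[0, 1/2)$. Hence $\varphi'$ is strictly increasing from $0$, so $\varphi' > 0$ on $(0, 1/2)$, which gives $\varphi \ge 0$ on $[0, 1/2)$. An alternative is to group the Taylor tail $\sum_{k\ge 3} z^k/k!$ against the leftover $z^2/2$ from $z^2$, but the calculus argument is tighter and shorter.

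For part (c), without loss of generality assume $x \ge y$, so $\min\{x,y\} = y$ and the hypothesis reduces to $x/y \le 1 + \delta$. Multiplying by $y > 0$ gives $x - y \le \delta\, y = \delta \min\{x,y\}$, and since $x \ge y$, $|x-y| = x-y$. The symmetric case $x < y$ is identical after swapping roles.

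For part (d), taking logarithms reduces the claim to $(x+1)\ln\!\bigl(1+1/x\bigr) \ge 1$. Substituting $t = 1/(x+1) \in (0,1)$ one computes $1 + 1/x = 1/(1-t)$, so the inequality becomes $-\ln(1-t) \ge t$, which follows immediately from the series $-\ln(1-t) = \sum_{k\ge 1} t^k/k \ge t$ on $[0,1)$. The main (minor) obstacle across the four parts is part (b), where one must ensure the range of $z$ is small enough that $\varphi''$ stays positive; the stated bound $z < 1/2 < \ln 2$ is exactly what is needed for the monotonicity argument to close.
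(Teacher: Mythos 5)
Your four proofs are all correct. Parts (a) and (c) coincide with the paper's argument: for (a), multiply the extremal ratio by each $y_j$, sum, and divide; for (c), clear the denominator after the WLOG. The two places where you diverge are (b) and (d). For (b) the paper estimates the Taylor tail directly — it writes $e^z = 1 + z + z^2 - \tfrac{z^2}{2} + \sum_{n\ge 3} z^n/n!$ and bounds the tail by $\tfrac{z^3}{6}\cdot\tfrac{1}{1-z} \le \tfrac{z^3}{3} \le \tfrac{z^2}{2}$, whereas you set $\varphi(z) = 1+z+z^2 - e^z$ and run a second-derivative/monotonicity argument on $[0,\ln 2)$. Both are sound; your calculus route avoids the geometric-series bookkeeping, while the paper's series route makes the role of the hypothesis $z < \tfrac12$ (namely $\tfrac{1}{1-z}\le 2$) slightly more explicit. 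For (d) the paper applies the mean value theorem to $\log$ on $(x, x+1)$ to get $\log(x+1)-\log(x) = 1/y$ for some $y \le x+1$, then concludes $(x+1)/y \ge 1$; you instead substitute $t = 1/(x+1)$ and reduce to $-\ln(1-t)\ge t$ via the power series. Again both are valid and comparably short; the paper's MVT argument needs no series, and your substitution makes the inequality reduce to a standard textbook fact. No gaps.
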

\begin{proof}[Proof of Lemma \ref{lemma:basic_inequalities}]
$ $\newline
\textbf{(a)} Given $x_i$ and $y_i$ positive, we have
\begin{align*}
 \min_{1\leq i \leq n} \frac{x_i}{y_i}& \leq \frac{x_j}{y_j} \leq   \max_{1\leq i \leq n} \frac{x_i}{y_i}, \\
 y_j \left( \min_{1\leq i \leq n} \frac{x_i}{y_i} \right) &\leq x_j \leq y_j \left( \max_{1\leq i \leq n} \frac{x_i}{y_i} \right).
\end{align*}
Taking the sum over $j$, we get
\begin{align*}
    \left( \sum_{j=1}^n y_j \right) \left( \min_{1\leq i \leq n} \frac{x_i}{y_i} \right) &\leq \sum_{j=1}^n x_j \leq \left( \sum_{j=1}^n y_j \right) \left( \max_{1\leq i \leq n} \frac{x_i}{y_i} y_j \right), \\
    \min_{1\leq i \leq n} \frac{x_i}{y_i} &\leq \frac{\sum_{j=1}^n x_j}{\sum_{j=1}^n y_j} \leq \max_{1\leq i \leq n} \frac{x_i}{y_i}.
\end{align*}
\textbf{(b)} For the second  inequality,  $\exp(x) \leq 1 + |x| + |x|^2$, we have to deal with the case $x>0$. Since $x\leq \frac{1}{2}$,
\begin{align*}
   \exp(x) &= \sum_{n=1}^{\infty} \frac{x^n}{n!} = 1 + x + x^2 - \frac{x^2}{2} + \sum_{n=3}^{\infty} \frac{x^n}{n!} \leq 1+ x + x^2 - \frac{x^2}{2} + \frac{x^3}{6} \sum_{n=3}^{\infty} x^{n-3}, \\
   &\leq 1+ x + x^2 - \frac{x^2}{2} + \frac{x^3}{6} \frac{1}{1-x} \leq 1+ x + x^2 - \frac{x^2}{2} + \frac{x^3}{3}\leq 1 + x + x^2.
\end{align*}
\textbf{(c)} For the third inequality, WLOG assume $x>y$. Then, we have
\begin{align*}
    \frac{x}{y} \leq 1 + \delta  \Rightarrow x \leq y + y \delta \Rightarrow |x-y| \leq y \delta.
\end{align*}
\textbf{(d)} For the fourth inequality, taking the $\log$ of both sides, it is equivalent to $(x+1) \big[\log(x+1) - \log(x)\big] \geq 1 $. By the mean value theorem, there exists a number $y$ between $x$ and $x+1$ such that  $\log(x+1) - \log(x) = 1/y$, then $(x+1)/y \geq 1$. 
\end{proof}

By the choice of $\eta = \frac{\varepsilon}{U}$ and the definition of $U$, we also have the following conditions on $\eta$:
\begin{equation}
\eta \le \frac{1}{2}; \quad \frac{\eta}{\tau} \le \frac{1}{4\log(n)\max\left\{ 1, \alpha + \beta \right\}}.
\label{ineq:bound_eta}
\end{equation}
Now we come to the proofs of lemmas and the corollary in the main text.
\subsection{Proof of Lemma~\ref{lemma:kth_inequality}}
$ $\newline
\textbf{(a) + (b)}:
From the definitions of $a_i^k$ and $a_i^*$, we have
\begin{equation*}
\log \left( \frac{a_i^*}{a_i^k} \right) = \left( \frac{u_i^* - u_i^k}{\eta}\right) + \log \left( \frac{\sum_{j = 1}^{n} \exp(\frac{v^*_j - C_{ij}}{\eta} ) }{ \sum_{j = 1}^{n} \exp(\frac{v^{k}_j - C_{ij}}{\eta} )}\right).
\end{equation*}
The required inequalities are equivalent to an  upper bound and a lower bound for the second term of the RHS.
Apply part (a) of Lemma  \ref{lemma:basic_inequalities}, we obtain 
\begin{align*}
    \min_{1 \leq j \leq n} \frac{v^*_j - v^{k}_j}{\eta}   \le \log\left( \frac{a_i^*}{a_i^k} \right) -  \frac{u_i^* - u_i^k}{\eta} \leq  \max_{1 \leq j \leq n} \frac{v^*_j - v^{k}_j}{\eta}.
\end{align*}
Part (b) follows similarly. Therefore, we obtain the conclusion of Lemma~\ref{lemma:kth_inequality}.
\subsection{Proof of Corollary \ref{corollary:solution_sum_bound}}
Recall that we have proved in Lemma \ref{lemma:key_equation}:
\begin{align*}
g(X^*) + (2\tau + \eta) x^* = \tau (\alpha + \beta), \\
f(\widehat{X}) + 2\tau\widehat{x} = \tau (\alpha + \beta).
\end{align*}
From the second equality and the fact that $f(\widehat{X}) \ge 0$ (it is easy to see that for $X$ that $X_{ij}\ge 0$, the \textbf{KL} terms and $\langle C, X \rangle$ are all non-negative), we immediately have $\widehat{x} \le \frac{\alpha+\beta}{2}$, proving the second inequality. For the first inequality, we have
$g(X^*) \ge - \eta H(X^*)  \ge - 2\eta x^* \log(n) - \eta x^* + \eta x^* \log(x^*)$. Therefore, we find that
\begin{align*}
- 2\eta x^* \log(n) - \eta x^* + \eta x^* \log(x^*) &\leq \tau(\alpha + \beta) - (2\tau + \eta)x^*,  \\
\eta x^* \log(x^*) + 2\Big(\tau - \eta \log(n)\Big) x^* &\leq \tau(\alpha+\beta).
\end{align*}
It follows from the inequality $z \log(z) \geq z - 1$ for all $z > 0$ that
\begin{align*}
\eta (x^* - 1) + 2( \tau - \eta \log(n)) x^*  &\leq \tau(\alpha+\beta), \\
 x^* (2\tau - 2\eta \log(n) + \eta) &\leq  \tau(\alpha+\beta) + \eta.
 \end{align*}
By inequality \eqref{ineq:bound_eta}, $4\eta \log(n) \le \tau$. Then
 \begin{align*}
 x^* &\leq \frac{\tau(\alpha+\beta) + \eta}{2\tau - 2\eta \log(n) + \eta} \leq \frac{\tau(\alpha+\beta) - (\alpha + \beta) \eta \log(n)}{2\tau - 2\eta \log(n)} + \frac{(\alpha + \beta)\eta \log(n) + \eta }{2\tau - 2\eta \log(n)}, \\
  &\leq \frac{\alpha + \beta}{2}+(\alpha + \beta) \frac{\eta \log(n)}{2\tau -2\eta \log(n)} + \frac{ \frac{\tau}{4\log(n)}}{\frac{3}{2}\tau}\leq \Big(\frac{1}{2}+ \frac{\eta \log(n)}{2\tau - 2 \eta \log(n)}\Big)(\alpha+\beta) + \frac{1}{6\log(n)}.
\end{align*}
As a consequence, we obtain the conclusion of the corollary.

\subsection{Proof of Lemma \ref{lemma:bound_xk}}
$ $ \newline
\textbf{(a)} We prove that $\Lambda_{k} \leq \frac{\eta^2}{8(\tau+1)}$ for $\eta = \frac{\varepsilon}{U}$ and $k \geq \left( \frac{\tau}{\eta} + 1 \right) \left[ \log(8\eta R \big) + \log(\tau(\tau+1)) + 3\log(\frac{1}{\eta}) \right]$ (note that the stated bound can be obtained by replacing $k$ with $k - 1$).

Denote  $\frac{8\eta R (\tau +1)}{\tau^2} = D$ and $\frac{\eta}{\tau}=s > 0$. From inequality \eqref{ineq:bound_eta}, we have $s<1$. The required inequality is equivalent to
\begin{align*}
\frac{\eta^2}{8(\tau+1)} &\geq  \Big(\frac{\tau}{\tau + \eta}\Big)^k \tau R 
\iff \Big(\frac{\tau + \eta}{\tau} \Big)^k \frac{\eta^3}{\tau^3} \geq \frac{8\eta R (\tau +1)}{\tau^2} 
\iff \big(1+s\big)^k s^3 \geq D.
\end{align*}

Let $t = 1 + \frac{\log(D)}{3\log(\frac{1}{s})}$. By definition \eqref{quan:R}, $R \ge \log(n)$, thus $D \ge \frac{8\eta\log(n)(\tau+1)}{\tau^2} > \frac{\eta^3}{\tau^3} = s^3$ and $t >  1 + \frac{3\log(s)}{3\log(\frac{1}{s})} = 0$. We claim the following chain of inequalities 
\begin{align*}
s^3 (1 + s)^k &\ge s^3 (1+s)^{(\frac{1}{s}+1)3\log(\frac{1}{s})t}\\
&\ge s^3 e^{3\log(\frac{1}{s})t}.
\end{align*}
The first inequality results from $k \ge \left(\frac{\tau U}{\epsilon} + 1 \right) \left[ \log(8\eta R \big) + \log(\tau(\tau+1)) + 3\log \left( \frac{U}{\epsilon} \right) \right] = \big(1+\frac{1}{s}\big) 3 \log\big(\frac{1}{s}\big) t > 0$ (using the definitions of $D$, $s$, the choice of $t$ and $\eta = \frac{\varepsilon}{U}$). The second inequality is due to part (d) of Lemma \ref{lemma:basic_inequalities}. The last equality is
\begin{equation*}
s^3 e^{3\log(\frac{1}{s})t} = \frac{1}{s^{3t-3}} = \frac{1}{s^{\log(D)/\log(1/s)}} = \frac{1}{s^{-\log_s(D)}} = D.
\end{equation*}
We have thus proved our claim of part (a).

\textbf{(b)} We need to prove $| x^k - x^*| \le \frac{3}{\eta} \min \{x^*,x^k\} \Delta_k$. From the definition of $x^k$ and $x^*$ and note that they are non-negative:
\begin{align*}
x^k = \sum_{i,j=1}^n \exp\left(\frac{u_i^{k} + v_j^{k} - C_{ij}}{\eta} \right) \quad \text{and} \quad x^* = \sum_{i,j=1}^n \exp\left(\frac{u_i^* + v_j^* - C_{ij}}{\eta} \right).
\end{align*}

Now, we have
\begin{align*}
    \frac{\exp\Big(\frac{u_i^k + v_j^k - C_{ij}}{\eta} \Big)}{\exp\Big(\frac{u_i^* + v_j^* - C_{ij}}{\eta} \Big)} 
    = \exp{\left( \frac{u_i^k - u_i^*}{\eta} \right)} \exp{\left( \frac{v_j^k - v_j^*}{\eta} \right)} 
    \le \left[ \max_{1\le i\le n} \exp\left(\frac{|u_i^{k}- u_i^*|}{\eta} \right) \right] \left[ \max_{1\le j \le n} \exp\left(\frac{|v_j^{k}- v_j^*|}{\eta} \right) \right].
\end{align*}

Note that each of $x^k$ and $x^*$ is the sum of $n^2$ elements and the ratio between  $\exp\Big(\frac{u_i^k + v_j^k - C_{ij}}{\eta} \Big)$ and $\exp\Big(\frac{u_i^* + v_j^* - C_{ij}}{\eta} \Big)$ is bounded by $\left[ \displaystyle \max_{1\le i\le n} \exp\left(\frac{|u_i^{k}- u_i^*|}{\eta} \right) \right] \left[ \displaystyle \max_{1\le j \le n} \exp\left(\frac{|v_j^{k}- v_j^*|}{\eta} \right) \right]$ for all pairs $i,j$. 
Apply part (a) of Lemma \ref{lemma:basic_inequalities}, we find that
\begin{align*}
\max \left\{\frac{x^*}{x^k}, \frac{x^k}{x^*}\right\} 
\leq \left[ \max_{1\le i\le n} \exp\left(\frac{|u_i^{k}- u_i^*|}{\eta} \right) \right] \left[ \max_{1\le j \le n} \exp\left(\frac{|v_j^{k}- v_j^*|}{\eta} \right) \right].
\end{align*}
We have proved from part (a) that $\Lambda_{k -1} \leq \frac{\eta^2}{8(\tau+1)} \le \frac{\eta^2}{8}$. From Theorem  \ref{theorem:bound_optimal_solution} we get $\Delta_k \le \Lambda_{k-1}$. It means that \begin{align*}
    \max_{i,j}\Big\{\Big|\frac{u_i^k - u_i^*}{\eta}\Big|, \Big|\frac{v_j^k -v_j^*}{\eta} \Big| \Big\} = \frac{\Delta_k}{\eta}\leq \frac{\Lambda_{k-1}}{\eta} \leq \frac{\eta}{8} \leq  \frac{1}{8}.
\end{align*}
Apply part (b) of Lemma \ref{lemma:basic_inequalities},
\begin{align*}
\exp\frac{|u_i^{k}- u_i^*|}{\eta} \leq 1 + \frac{|u_i^{k} - u_i^*|}{\eta} + \Big(\frac{|u_i^{k} - u_i^*|}{\eta}\Big)^2, \quad \text{and} \quad \exp\frac{|v_j^{k} - v_j^*|}{\eta} \leq 1 + \frac{|v_j^{k} - v_j^*|}{\eta} + \Big(\frac{|v_j^{k} - v_j^*|}{\eta}\Big)^2.
\end{align*}
Then, we find that
\begin{align*}
\max \left\{\frac{x^*}{x^k}, \frac{x^k}{x^*}\right\} & \leq \left(1 + \frac{1}{\eta} \Delta_k + \frac{1}{\eta^2} \Delta_k^2
\right) \left(1 + \frac{1}{\eta} \Delta_k + \frac{1}{\eta^2}\Delta_k^2 \right)  = 1 + 2\frac{\Delta_k}{\eta}  + 3\frac{\Delta_k^2}{\eta^2}  + 2\frac{\Delta_k^3}{\eta^3} + \frac{\Delta_k^4}{\eta^4}\\
& \leq 1 + \frac{\Delta_k}{\eta}\Big(2 + 3 \frac{\Delta_k}{\eta} + 2 \frac{\Delta_k^2}{\eta^2} +  \frac{\Delta_k^3}{\eta^3} \Big) \leq 1 + \frac{\Delta_k}{\eta} \Big(2 + 3 \frac{1}{8} + 2 \frac{1}{8^2} + \frac{1}{8^3} \Big) \\
&\leq 1 + 3\frac{\Delta_k}{\eta}.
\end{align*}
Apply part (c) of Lemma \ref{lemma:basic_inequalities}, we get 
\begin{align*}
|x^k - x^*| \le \frac{3}{\eta} \Delta_k \min\{x^k, x^*\}.
\end{align*}
Therefore, we obtain the conclusion of part (b).

\textbf{(c)} From Lemma \ref{lemma:bound_xk}(a) and Theorem \ref{theorem:bound_optimal_solution} we have $\frac{\Delta_k}{\eta} \leq \frac{\Lambda_k}{\eta} \leq \frac{\eta}{8}\leq \frac{1}{12}$.  By part (b) of Lemma \ref{lemma:bound_xk}, we have  $x^k  \leq x^* + \frac{3}{\eta}\Delta_k x^* \leq \frac{3}{2}x^*$. Then, we obtain that
\begin{align*}
    x^k &\leq x^* + \frac{3}{\eta}\Delta_k x^* \leq \left[(\alpha + \beta)\Big( \frac{1}{2} + \frac{\eta \log(n)}{2\tau - 2\eta \log(n)}\Big) + \frac{1}{6\log(n)} \right] \big(1 + 3\frac{\Delta_k}{\eta}\big) \\
    &\leq (\alpha + \beta)\Big(\frac{1}{2} + \frac{\eta \log(n)}{2\tau - 2\eta \log(n)} \Big)\big(1 + 3\frac{\Delta_k}{\eta}\big) + \frac{1}{4\log(n)} \\
    & \leq \frac{1}{2}(\alpha+\beta)  + (\alpha + \beta) \frac{3}{2}\frac{\Delta_k}{\eta} + (\alpha+\beta) \frac{\eta \log(n)}{\tau}  + \frac{1}{4\log(n)}\\
    &\leq \frac{1}{2}\big(\alpha+\beta\big) +  \frac{1}{4} + (\alpha + \beta)\frac{3 \eta}{12\tau} + \frac{1}{4\log(n)} \\
    &\leq \frac{1}{2}\big( \alpha + \beta\big) + \frac{1}{2} + \frac{1}{4\log(n)}.
\end{align*}
As a consequence, we reach the conclusion of part (c). 
\end{document}